\newcommand{\tr}{\textnormal{tr}}
\newcommand{\ket}[1]{| #1 \rangle}
\newcommand{\bra}[1]{\langle #1 |}
\newcommand{\proj}[2]{| #1 \rangle\!\langle #2 |}
\newcommand{\id}{\ensuremath{\mathds{1}}}
\def\beq{\begin{equation}}
\def\eeq{\end{equation}}
\def\bq{\begin{quote}}
\def\eq{\end{quote}}
\def\ben{\begin{enumerate}}
\def\een{\end{enumerate}}
\def\bit{\begin{itemize}}
\def\eit{\end{itemize}}
\def\ra{\rightarrow}
\def\lb{\left(}
\def\rb{\right)}
\def\lset{\lbrace}
\def\rset{\rbrace}
\def\lk{\left\langle}
\def\rk{\right\rangle}
\def\l|{\left|}
\def\r|{\right|}
\def\lbr{\left[}
\def\rbr{\right]}
\def\i{\textnormal{id}}
\def\one{\id}
\newcommand\C{\mathbbm{C}}
\newcommand\R{\mathbbm{R}}
\newcommand\N{\mathbbm{N}}
\newcommand\M{\mathcal{M}}
\newcommand\D{\mathcal{D}}
\newcommand{\Tm}{\mathcal{T}}
\newcommand{\Lm}{\mathcal{L}}
\newcommand{\liou}{\mathcal{L}}
\newcommand{\ketbra}[1]{|#1\rangle\langle#1|}
\theoremstyle{plain}
\newtheorem{thm}{Theorem}[section]
\newtheorem{lem}{Lemma}[section]
\newtheorem{cor}{Corollary}[section]
\theoremstyle{definition}
\newtheorem{defn}{Definition}[section]
\begin{document}
\title{{Relative Entropy Convergence for Depolarizing Channels}}
\author{\vspace{-0.15cm}Alexander M\"uller-Hermes\thanks{muellerh@ma.tum.de}~}
\author{Daniel Stilck Fran\c{c}a\thanks{dsfranca@mytum.de}~}
\author{Michael M.~Wolf\thanks{wolf@ma.tum.de}~}
\affil{\vspace{-0.15cm}\small{Department of Mathematics, Technische Universit\"at M\"unchen, 85748 Garching, Germany}}
\maketitle

\begin{abstract}
We study the convergence of states under continuous-time depolarizing channels with full rank fixed points in terms of the relative entropy. The optimal exponent of an upper bound on the relative entropy in this case is given by the log-Sobolev-1 constant. Our main result is the computation of this constant. As an application we use the log-Sobolev-1 constant of the depolarizing channels to improve the concavity inequality of the von-Neumann entropy. This result is compared to similar bounds obtained recently by Kim et al. and we show a version of Pinsker's inequality, 
which is optimal and tight if we fix the second argument of the
relative entropy. Finally, we consider the log-Sobolev-1 constant of tensor-powers of the completely depolarizing channel and use a quantum version of Shearer's inequality to prove a uniform lower bound.    
\end{abstract}
\section{Introduction}

Let $\M_d$ denote the set of complex $d\times d$-matrices, $\D_d\subset \M_d$ the set of quantum states, i.e. positive matrices with trace equal to $1$, and $\D_d^+$ the set of strictly positive states. 
The relative entropy (also called quantum Kullback-Leibler divergence) of $\rho,\sigma\in\D_d$ is defined as 
\begin{align}
 D(\rho\|\sigma) :=
\begin{cases} 
\tr[\rho(\log\rho-\log\sigma)], & \mbox{if }  \mbox{supp}(\rho)\subset\mbox{supp}(\sigma)
\\ +\infty, & \mbox{otherwise}
\end{cases}.
\label{equ:RelEnt}
\end{align}
The relative entropy defines a natural distance measure to study the convergence of Markovian time-evolutions. For some state $\sigma\in\D_d$ consider the generalized depolarizing Liouvillian $\Lm_\sigma:\M_d\ra\M_d$ defined as
\begin{align}
\Lm_\sigma\lb\rho\rb := \text{tr}\lbr\rho\rbr\sigma - \rho.
\label{equ:DepLiou}
\end{align} 
This Liouvillian generates the generalized depolarizing channel $T^\sigma_t:\M_d\ra\M_d$ with $T^\sigma_t(\rho) := e^{t\Lm_\sigma}(\rho) = (1-e^{-t})\text{tr}\lbr\rho\rbr\sigma + e^{-t}\rho$, where $t\in\R^+$ denotes a time parameter. As $T^{\sigma}_t(\rho)\ra\sigma$ for $t\ra\infty$ we can study the convergence speed of the depolarizing channel with a full rank fixed point $\sigma\in\D_d$ by determining the largest constant $\alpha\in\R^+$ such that
\begin{align}
D(T^\sigma_t(\rho)\|\sigma)\leq e^{-2\alpha t}D(\rho\|\sigma)
\label{equ:LSInequInt}
\end{align}
holds for any $\rho\in\D_d$ and any $t\in\R^+$. This constant is known as the logarithmic Sobolev-1 constant~\cite{Olkiewicz1999246,Kastoryanosob} of $\Lm_\sigma$, denoted by $\alpha_1\lb\Lm_\sigma\rb$. In the following we will compute this constant and then use it to derive an improvement on the concavity of von-Neumann entropy.

\section{Preliminaries and notation}\label{sec:Prelim}

Consider a primitive\footnote{A Liouvillian is primitive if, and only if, it has a unique full rank fixed point $\sigma$ and for any $\rho\in\D_d$ we have $e^{t\Lm}(\rho)\ra\sigma$ as $t\ra\infty$.} Liouvillian $\Lm$ with full rank fixed point $\sigma\in\D_d$ and denote by $T_t := e^{t\Lm}$ the quantum dynamical semigroup generated by $\Lm$. Consider the function $f(t):=D\left(T_t(\rho)\big{\|}\sigma\right)$ for some 
initial state $\rho\in\D_d$ and note that if 
\begin{align*}
\frac{df}{dt}\leq-2\alpha f
\end{align*}
holds for some $\alpha\in\R_+$, then it follows that $f(t)\leq e^{-2\alpha t}f(0)$. The time derivative of the relative entropy at $t=0$, also called the entropy production~\cite{spohn},
is given by:
\begin{align}
 \frac{d}{dt} D\left(T_t(\rho)\Big{\|}\sigma\right)\bigg{|}_{t=0} &=-\tr[\liou(\rho)(\log(\sigma)-\log(\rho))] 
 \label{equ:EntrProd}
\end{align}
as $\tr(\liou(\rho)) = 0$ for any $\rho\in\D_d$.
This motivates the following definition:
\begin{defn}[log-Sobolev-1 constant, \cite{Olkiewicz1999246,Kastoryanosob}]
 For a primitive Liouvillian $\liou:\M_d\to\M_d$ with full rank fixed point $\sigma\in\D_d$ we define its \textbf{log-Sobolev-1 constant} as
 \begin{align}
  \alpha_1(\liou) :=\sup\Big{\lset} \alpha\in\R :\tr[\liou(\rho)(\log(\sigma)-\log(\rho))] \geq 2\alpha D\left(\rho\|\sigma\right),\forall \rho\in\D^+_d \Big{\rset}
  \label{equ:alpha1Def}
 \end{align}
\label{defn:LS-1}
\end{defn}
For a primitive Liouvillian $\liou:\M_d\ra\M_d$ the preceding discussion shows that \eqref{equ:LSInequInt} holds for any $\alpha \leq \alpha_1(\liou)$. Furthermore, $\alpha_1(\Lm)$ is the optimal constant for which this inequality holds independent of $\rho\in\D_d$ (for states $\rho$ not of full rank this follows from a simple continuity argument). 

In the following we will need some functions defined as continuous extensions of quotients of relative entropies. We denote by $Q_\sigma:\D^+_d\ra\R$ the continuous extension of the function $\rho\mapsto \frac{D(\sigma\|\rho)}{D(\rho\|\sigma)}$ (see Appendix \ref{sec:ContRelQu}) given by 
\begin{align}
Q_\sigma(\rho):=\begin{cases}\frac{D(\sigma\|\rho)}{D(\rho\|\sigma)}, &\rho\neq \sigma \\
1, &\rho=\sigma\end{cases}.
\label{equ:ContExt1}
\end{align}
Note that for $x\in[0,1]$ and $y\in(0,1)$ the binary relative entropy is defined as
\begin{align}
D_2(x\| y) := x\log\left(\frac{x}{y}\right) + (1-x)\log\left(\frac{1-x}{1-y}\right) .
\end{align}
This is the classical relative entropy of the probability distributions $(x,1-x)$ and $(y,1-y)$. For $y\in(0,1)$ we denote by $q_y:\lb 0,1\rb\ra\R$ the continuous extension of $x\mapsto \frac{D_2(y\|x)}{D_2(x\|y)}$ given by 
\begin{align}
q_y(x):=\begin{cases}\frac{D_2(y\|x)}{D_2(x\|y)}, &x\neq y \\
1, &x=y\end{cases}.
\label{equ:ContExt2}
\end{align}


%
%

\section{Log-Sobolev-1 constant for the depolarizing Liouvillian}

Note that for the depolarizing Liouvillian $\Lm_\sigma$ with $\sigma\in\D^+_d$ as defined in \eqref{equ:DepLiou} we have 
\begin{align*}
\tr[\liou_\sigma(\rho)(\log(\sigma)-\log(\rho))]=D(\rho\|\sigma)+D(\sigma\|\rho). 
\end{align*}
Inserting this into Definition \ref{defn:LS-1} we can write
\begin{align}
\alpha_1(\liou_\sigma) = \inf\limits_{\rho\in\D_d^+}\frac{1}{2}\Big{(}1 + Q_\sigma(\rho)\Big{)}.
\label{equ:alpha1DepLiou}
\end{align}  

Our main result is the following theorem:

\begin{thm}\label{thm:a1dcfinal}
Let $\liou_\sigma:\M_d\to\M_d$ be the depolarizing Liouvillian with full rank fixed point $\sigma\in\D_d$ as defined in \eqref{equ:DepLiou}. Then we have
\begin{align*}
\alpha_1\lb\liou_\sigma\rb = \min_{x\in\lbr 0,1\rbr}\frac{1}{2}\lb 1 + q_{s_\text{min}(\sigma)}(x) \rb,
\end{align*}
where $s_\text{min}(\sigma)$ denotes the minimal eigenvalue of $\sigma$.
\end{thm}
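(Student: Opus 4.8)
The plan is to reduce the $d$-dimensional optimization in \eqref{equ:alpha1DepLiou} to the binary problem appearing in the statement. The key observation is that $\alpha_1(\liou_\sigma)$ depends only on the ratio $D(\sigma\|\rho)/D(\rho\|\sigma)$, so I would first try to argue that the infimum over all full-rank $\rho$ is attained (or approached) by states that are ``aligned'' with $\sigma$ in a strong sense, reducing the effective number of free parameters. Concretely, I would fix the eigenbasis of $\sigma$ and study how $Q_\sigma(\rho)$ behaves under the pinching (dephasing) map $\rho\mapsto\sum_k P_k\rho P_k$ onto that eigenbasis, where $P_k$ are the spectral projections of $\sigma$. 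Since the relative entropy is monotone under this CPTP map and $\sigma$ is invariant, I expect $D(\rho\|\sigma)$ to decrease under pinching while controlling $D(\sigma\|\rho)$, which should let me restrict attention to $\rho$ that commute with $\sigma$.

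Once $\rho$ and $\sigma$ are simultaneously diagonal with eigenvalues $(r_k)$ and $(s_k)$, the problem becomes the classical one of minimizing
\begin{align*}
\frac{1}{2}\lb 1 + \frac{\sum_k s_k\log(s_k/r_k)}{\sum_k r_k\log(r_k/s_k)}\rb
\end{align*}
over probability vectors $(r_k)$, with $(s_k)$ fixed. The next step is to show that this classical minimum is achieved by a binary distribution concentrated on the eigenvalue direction of $\sigma$ carrying the minimal eigenvalue $s_\text{min}(\sigma)$. The natural strategy here is a grouping or data-processing argument: I would merge all coordinates except the one corresponding to $s_\text{min}$, arguing that coarse-graining the pair $(r,s)$ on any two coordinates into a single coordinate with the summed masses can only decrease the ratio $D(\sigma\|\rho)/D(\rho\|\sigma)$, or at least does not increase the infimum. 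This should collapse the problem to two effective levels with $\sigma$-mass $s_\text{min}$ and $1-s_\text{min}$, at which point $Q_\sigma(\rho)$ becomes exactly $q_{s_\text{min}(\sigma)}(x)$ for $x$ the corresponding $\rho$-mass, and the infimum over $x\in(0,1)$ extends continuously to $[0,1]$ by the definition in \eqref{equ:ContExt2}.

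The hard part, I expect, is the second reduction: proving that among commuting $\rho$ the optimal choice singles out the \emph{minimal} eigenvalue of $\sigma$ rather than some other coordinate or a genuinely multilevel configuration. The ratio $D(\sigma\|\rho)/D(\rho\|\sigma)$ is not obviously monotone under merging, and the two relative entropies respond differently to perturbations, so a naive convexity argument is unlikely to suffice. I would attack this by a careful variational analysis: parametrize a candidate minimizer, compute the first-order stationarity conditions via Lagrange multipliers (with the constraint $\sum_k r_k = 1$), and show that any stationary point with support on more than the two relevant levels can be strictly improved by shifting mass toward the $s_\text{min}$ coordinate. The intuition driving the $s_\text{min}$ dependence is that $D(\sigma\|\rho)$ is most sensitive to small $s_k$, so the worst case for convergence is the direction where $\sigma$ is smallest; making this quantitative, and ruling out interior multilevel optima, is where the real work lies. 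Finally I would invoke the continuity of $Q_\sigma$ and $q_y$ established in the preliminaries to guarantee that the infimum is a genuine minimum over the compact interval $[0,1]$.
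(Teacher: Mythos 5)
Your high-level architecture (reduce to states commuting with $\sigma$, then a Lagrange-multiplier analysis, then a binary problem) matches the paper's, but two of your three steps have genuine gaps. First, the pinching argument for the commuting reduction does not work as stated: if $\mathcal{P}$ is the pinching onto the eigenbasis of $\sigma$, then data processing gives \emph{both} $D(\sigma\|\mathcal{P}(\rho))\leq D(\sigma\|\rho)$ and $D(\mathcal{P}(\rho)\|\sigma)\leq D(\rho\|\sigma)$, so the numerator and the denominator of $Q_\sigma$ both decrease and the effect on the ratio is indeterminate — decreasing the denominator pushes the ratio the wrong way. The paper avoids this entirely: it fixes the spectrum of $\rho$, writes $Q_\sigma(\rho)$ as a ratio of two positive affine functions of the unistochastic matrix $P_{ij}=|\langle v_i|w_j\rangle|^2$ relating the eigenbases, relaxes to doubly stochastic matrices, and uses quasi-linearity plus Birkhoff's theorem to conclude the minimum sits at a permutation matrix, i.e.\ at a commuting state with the \emph{same} spectrum (note the pinched state generally has a different spectrum, so your reduction is not even of the same type).

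Second, even granting the Lagrange analysis — which, as in the paper, forces $u_j=r_j/s_j$ to take at most two values because an affine function meets a strictly convex one in at most two points — you land on two-level configurations whose effective parameter is $p=\sum_{i\in A}s_i$ for an arbitrary subset $A$ of eigenvalues of $\sigma$, giving the value $q_p(x)$. Your proposal stops at the intuition that the worst case should involve $s_\text{min}(\sigma)$, but "shifting mass toward the $s_\text{min}$ coordinate" does not decide which subset sum $p$ minimizes $q_p(x)$; this is a separate, nontrivial optimization over $p$. The paper resolves it with Lemma \ref{lem:concaveFunc}: $y\mapsto q_y(x)$ is quasi-concave (proved via L'Hospital-type monotonicity rules), so its minimum over the interval $[s_\text{min}(\sigma),1-s_\text{min}(\sigma)]$ containing all relevant subset sums is attained at an endpoint, and the symmetry $q_{1-y}(x)=q_y(x)$ identifies both endpoints with $q_{s_\text{min}(\sigma)}$. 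Without this (or an equivalent) ingredient your argument cannot single out $s_\text{min}(\sigma)$. A minor further point: to legitimately apply Lagrange multipliers you must first ensure the infimum is attained at a full-rank state in the interior of the feasible set, which the paper arranges by restricting to the compact set $K_\sigma$ using that $Q_\sigma(\rho)\to\infty$ as $\rho$ approaches a rank-deficient state.
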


\begin{figure}
\centering
\includegraphics[width=10cm]{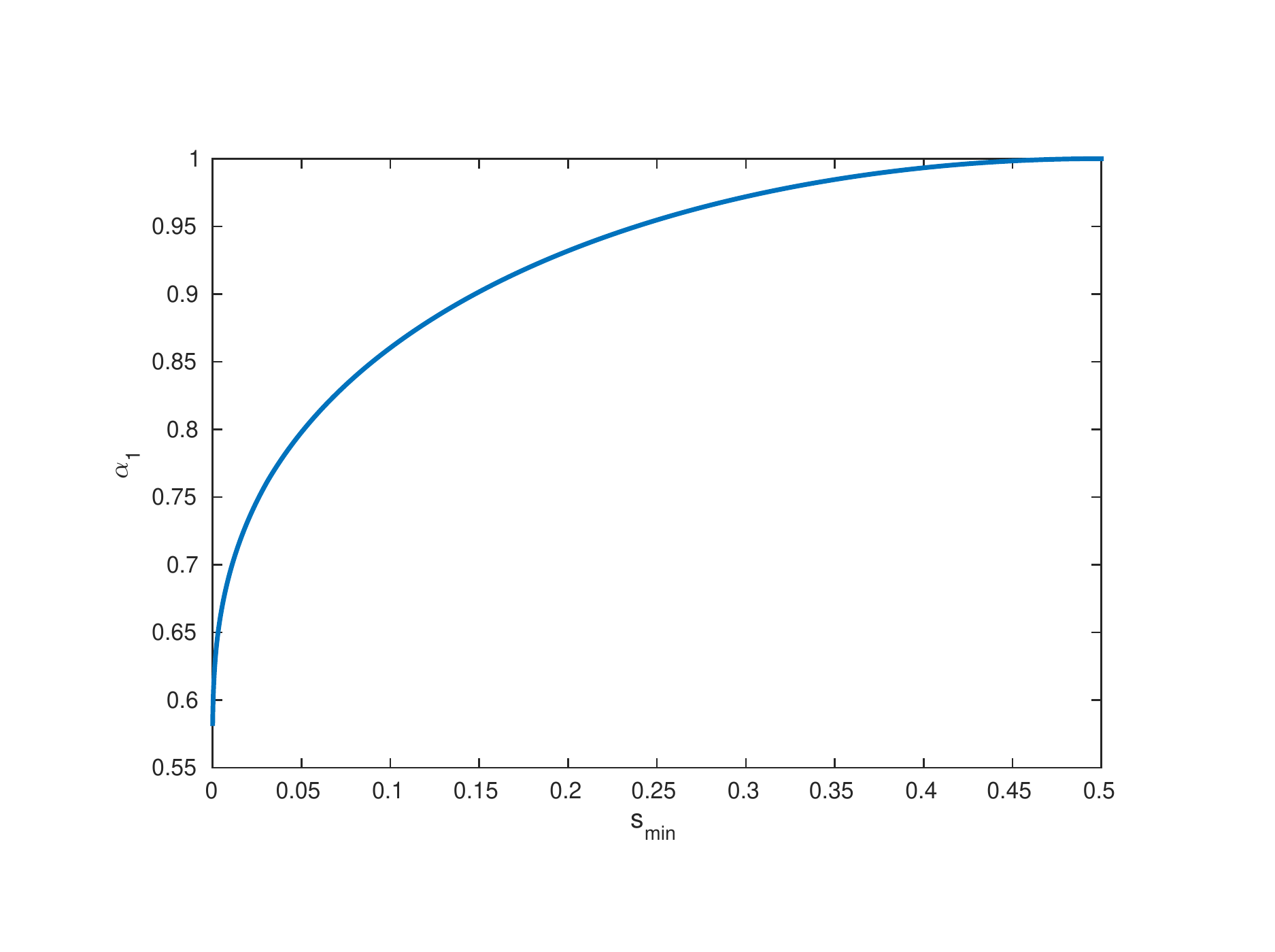}
\caption{$\alpha_1\lb\liou_\sigma\rb$ for $s_\text{min}(\sigma)\in\lbr 0,1\rbr$.}
\label{fig:alpha}
\end{figure} 

In Figure \ref{fig:alpha} the values of $\alpha_1\lb\liou_\sigma\rb$ depending on $s_\text{min}(\sigma)\in\lbr 0,1\rbr$ are plotted. Note that by Theorem \ref{thm:a1dcfinal} we have $\alpha_1\lb\liou_\sigma\rb\ra 1/2$ in the limit $s_\text{min}(\sigma)\ra 0$ (as $D_2(s_\text{min}(\sigma)\|x)\ra 0$ and $D_2(x\|s_\text{min}(\sigma))\ra \infty$ in this case). 

Before we state the proof of Theorem \ref{thm:a1dcfinal} we need to make a technical comment. By \eqref{equ:ContExt1} we have $Q_\sigma(\rho)\ra +\infty$ as $\rho\ra \partial \D_d$, i.e. as $\rho$ converges to a rank-deficient state. Therefore, the infimum in \eqref{equ:alpha1DepLiou} will be attained in a full rank state $\tilde{\rho}\in\D^+_{d}$ and we can restrict the optimization to the compact set $K_\sigma\subset \D_d$ (depending on $\sigma$) defined as 
\begin{align}
K_\sigma = \lset \rho\in\D^+_{d} : s_\text{min}(\rho)\geq s_\text{min}\lb\tilde{\rho}\rb -\epsilon\rset
\label{equ:Kset} 
\end{align}
for some fixed $\epsilon\in (0,s_\text{min}\lb\tilde{\rho}\rb)$ and where $s_\text{min}(\cdot)$ denotes the minimal eigenvalue. Note that the minimizing state $\tilde{\rho}$ is contained in the interior of $K_\sigma$. Now we have to solve the following optimization problem for fixed $\sigma\in\D^+_d$:
\begin{align}
\inf_{\rho\in\D^+_d} Q_\sigma(\rho) = \inf_{\rho\in K_\sigma} Q_\sigma(\rho).
\label{equ:BasicOpt}
\end{align} 

To prove Theorem \ref{thm:a1dcfinal} we will need the following lemma showing that the infimum in \eqref{equ:BasicOpt} is attained at states $\rho\in \D_d$ commuting with the fixed point $\sigma$. 

\begin{lem}\label{lem:infcommuting}

For any $\sigma\in\mathcal{D}^+_d$ we have
\begin{align*}
\inf_{\rho\in K_\sigma} Q_\sigma\lb\rho\rb = \inf_{\rho\in K_\sigma,\lbr\rho,\sigma\rbr = 0~}  Q_\sigma\lb\rho\rb
\end{align*}
where $Q_\sigma:D^+_d\ra\R$ denotes the continuous extension of $\rho\mapsto\frac{D(\sigma\|\rho)}{D(\rho\|\sigma)}$ (see \eqref{equ:ContExt1}).

\end{lem}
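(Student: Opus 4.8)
The plan is to prove the lemma by showing that pinching (the conditional expectation onto the commutant of $\sigma$) can only decrease $Q_\sigma$, so that the infimum over $K_\sigma$ is already attained at a commuting state, the commuting states being precisely the fixed points of this map. Concretely, I would let $\mathcal{E}_\sigma$ denote the trace-preserving, unital, self-adjoint projection onto the algebra of matrices commuting with $\sigma$ (dephasing in the eigenbasis of $\sigma$, or block-dephasing when $\sigma$ is degenerate). Since $\sigma$ is a fixed point of $\mathcal{E}_\sigma$ and pinching only makes the spectrum more mixed, $\mathcal{E}_\sigma$ maps $K_\sigma$ into itself (the minimal eigenvalue cannot decrease); hence it suffices to prove $Q_\sigma(\mathcal{E}_\sigma(\rho))\le Q_\sigma(\rho)$ for every $\rho\in K_\sigma$ and then replace the minimizer $\tilde\rho$ by its commuting pinching.

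Two structural facts drive the estimate. First, because $\log\sigma$ commutes with $\sigma$ and is therefore invariant under $\mathcal{E}_\sigma$, one has $\tr[\rho\log\sigma]=\tr[\mathcal{E}_\sigma(\rho)\log\sigma]$, which yields the exact chain rule $D(\rho\|\sigma)=D(\mathcal{E}_\sigma(\rho)\|\sigma)+D(\rho\|\mathcal{E}_\sigma(\rho))$. Thus the denominator decreases by exactly $\delta_D:=D(\rho\|\mathcal{E}_\sigma(\rho))\ge 0$. Second, by the data-processing inequality the numerator decreases as well, $D(\sigma\|\rho)\ge D(\sigma\|\mathcal{E}_\sigma(\rho))$, and using self-adjointness of $\mathcal{E}_\sigma$ together with $\sigma\in\mathrm{ran}\,\mathcal{E}_\sigma$ its deficit can be written as $\delta_N:=\tr[\sigma(\log\mathcal{E}_\sigma(\rho)-\mathcal{E}_\sigma(\log\rho))]\ge 0$, the positivity following from the operator Jensen inequality $\mathcal{E}_\sigma(\log\rho)\le\log\mathcal{E}_\sigma(\rho)$. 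Writing out the target inequality $Q_\sigma(\mathcal{E}_\sigma(\rho))\le Q_\sigma(\rho)$ and clearing the (positive) denominators reduces everything to the single scalar inequality $\delta_N\,D(\mathcal{E}_\sigma(\rho)\|\sigma)\ge\delta_D\,D(\sigma\|\mathcal{E}_\sigma(\rho))$, that is $\delta_N/\delta_D\ge Q_\sigma(\mathcal{E}_\sigma(\rho))$.

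I expect this last quotient inequality to be the main obstacle. Both numerator and denominator strictly decrease under pinching, so the bare data-processing inequality is not sufficient: one must show that the numerator deficit $\delta_N$ dominates the denominator deficit $\delta_D$ by at least the factor $Q_\sigma(\mathcal{E}_\sigma(\rho))$. I would attack this quantitatively, combining the operator-Jensen representation of $\delta_N$ with the integral formula $\log X-\log Y=\int_0^\infty[(Y+u)^{-1}-(X+u)^{-1}]\,\d u$ to compare $\delta_N$ and $\delta_D$ directly; alternatively, one can replace the single pinching by the dephasing semigroup $e^{s\mathcal{L}_d}$ with $\mathcal{L}_d(\sigma)=0$ and prove the differential version $\frac{d}{ds}\log D(\sigma\|\rho_s)\le\frac{d}{ds}\log D(\rho_s\|\sigma)$, integrating it from $s=0$ to $s=\infty$. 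Establishing either form of the deficit comparison—equivalently, that the relative rate of decrease of the numerator is at least that of the denominator—is the crux.

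Once this reduction is complete the optimization runs over states commuting with $\sigma$, where $Q_\sigma$ is the ratio of two \emph{classical} relative entropies of the eigenvalue distributions in the common eigenbasis; this is the natural starting point for the subsequent reduction to the binary quantity $q_{s_\text{min}(\sigma)}$ of Theorem \ref{thm:a1dcfinal}. A minor point to dispatch along the way is the degenerate case: when $\sigma$ has repeated eigenvalues the commutant is a direct sum of full matrix blocks and $\mathcal{E}_\sigma$ is the corresponding block-dephasing, but the chain rule and data-processing steps are insensitive to this and go through verbatim.
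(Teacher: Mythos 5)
Your reduction is set up correctly as far as it goes: the exact identity $D(\rho\|\sigma)=D(\cE_\sigma(\rho)\|\sigma)+D(\rho\|\cE_\sigma(\rho))$, the data-processing bound on the numerator, the fact that pinching keeps you inside $K_\sigma$ (unital channels can only raise the minimal eigenvalue), and the observation that everything reduces to $\delta_N\,D(\cE_\sigma(\rho)\|\sigma)\ge\delta_D\,D(\sigma\|\cE_\sigma(\rho))$ are all sound. But you then explicitly leave that inequality unproven, and it is not a routine technicality: in your formulation it \emph{is} the lemma. Neither data processing nor operator Jensen delivers it --- they give $\delta_N\ge 0$ and $\delta_D\ge 0$ separately, with no control on their ratio. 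The difficulty is already visible in the sub-case $\cE_\sigma(\rho)=\sigma$ with $\rho\neq\sigma$, where your inequality degenerates to $D(\sigma\|\rho)\ge D(\rho\|\sigma)$, itself a nontrivial claim. The two strategies you sketch for closing the gap (the integral representation of $\log X-\log Y$, or a differential inequality along the dephasing semigroup) are proposals rather than proofs, and there is no a priori guarantee that either succeeds. As written, the argument therefore has a hole exactly at its crux.

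The paper avoids the pinching route altogether. It fixes the spectrum $r$ of $\rho$ and notes that, with $P_{ij}=|\langle v_i|w_j\rangle|^2$ the unistochastic matrix relating the two eigenbases, both $D(\rho\|\sigma)$ and $D(\sigma\|\rho)$ are positive affine functions of $P$; hence their quotient is quasi-linear on the larger convex set of doubly stochastic matrices and attains its minimum at an extreme point, which by Birkhoff's theorem is a permutation matrix --- i.e.\ at a state with the \emph{same} spectrum $r$ commuting with $\sigma$ (followed by a compactness argument to transfer this to the actual minimizer). This sidesteps any comparison of deficits: one never needs to quantify how much each relative entropy drops, only that a ratio of affine functionals is minimized at a vertex. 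Note also that your target $Q_\sigma(\cE_\sigma(\rho))\le Q_\sigma(\rho)$ is a pointwise monotonicity statement strictly stronger than what the lemma needs; unless you can actually establish it, the Birkhoff/quasi-linearity argument is the complete and shorter path.
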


\begin{proof}

Consider the spectral decomposition $\sigma = \sum^d_{i=1}s_i \proj{v_i}{v_i}$ for $s\in\R_+^d$ and fix a vector $r\in\R_+^d$ which is not a permutation of $s$ and which fulfills $\min_i(r_i)\geq s_\text{min}\lb\tilde{\rho}\rb -\epsilon$ (see \eqref{equ:Kset}) and $\sum^d_{j=1} r_j = 1$. For some fixed orthonormal basis $\lset\ket{w_j}\rset_j$ consider $\rho := \sum^d_{j=1}r_j \proj{w_j}{w_j}\in K_\sigma$. Inserting $\rho$ into $Q_\sigma$ gives:
\begin{equation}\label{comeq1}
Q_\sigma(\rho) = \frac{D(\sigma\|\rho)}{D(\rho\|\sigma)} =\frac{-S(\sigma)-\tr[\sigma\log(\rho)]}{-S(\rho)-\tr[\rho\log(\sigma)]} = \frac{-S(\sigma)-\lk s,P\log(r)\rk}{-S(\rho)-\lk\log(s),Pr\rk} =: F(P)
\end{equation}
where we introduced $P\in\M_d$ given by $P_{ij} = |\lk v_i|w_j\rk|^2$ and $\log(s),\log(r)\in\R^d$ are defined as $(\log(s))_i = \log(s_i)$ and $(\log(r))_j = \log(r_j)$. Note that $P$ is a unistochastic matrix, i.e. a doubly stochastic matrix  whose entries are squares of absolute values of the entries of a unitary matrix. We will show that the minimum of $F$ over unistochastic matrices $P$ is attained at a permutation matrix. By definition of $P$ this shows that there exists a state $\rho'\in K_\sigma$ with spectrum $r$ and commuting with $\sigma$, which fulfills $Q_\sigma(\rho')\leq Q_\sigma(\rho)$.

As the set of unistochastic matrices is in general not convex~\cite{unistochasticproperties}, we want to consider the set of doubly stochastic matrices instead. By Birkhoff's theorem~\cite[Theorem II.2.3]{bhatia1997matrix} we can write any doubly stochastic $D\in\M_d$ as $D=\sum^k_{i=1} \lambda_i P_i$ for some $k\in\N$, numbers $\lambda_i\in\lbr 0,1\rbr$ with $\sum^k_{i=1}\lambda_i = 1$ and permutation matrices $P_i$. Now we can write the denominator of $F(D)$ as
\begin{align*}
-S(\rho)-\lk\log(s),Dr\rk = \sum^k_{i=1}\lambda_i\lb -S(\rho)-\lk\log(s),P_ir\rk\rb = \sum^k_{i=1}\lambda_i D(\rho_i\|\sigma) >0, 
\end{align*}  
where $\rho_i$ is the state obtained by permuting the eigenvectors of $\rho$ with $P_i$. In the last step we used Klein's inequality~\cite[p. 511]{nielsen2000quantum} together with the fact that $\rho_i\neq \sigma$ for any $1\leq i\leq k$ as their spectra are different. The previous estimate shows that $F$ is also well-defined on doubly stochastic matrices. 

Any unistochastic matrix is also doubly stochastic and we have
\begin{align*}
\inf\Big{\lset} F(P) : P\in\M_{d}\text{ doubly stochastic}\Big{\rset}\leq \inf\Big{\lset} F(P) : P\in\M_{d}\text{ unistochastic}\Big{\rset}.
\end{align*}
Note that $S(\sigma)$ and $S(\rho)$ in \eqref{comeq1} only depend on $s\in\R_+^d$ and $r\in\R_+^d$ and thus the numerator and the denominator of $F$ are positive affine functions in $P$. This shows that $F$ is a quasi-linear function~\cite[p. 91]{boyd2004convex}
 on the set of doubly stochastic matrices. It can be shown (see \cite{boyd2004convex}) that the minimum of such a function over a compact and convex set is always attained in an extremal point of the set. By Birkhoff's theorem~\cite[Theorem II.2.3]{bhatia1997matrix} the extremal points of the compact and convex set of doubly stochastic matrices are the permutation matrices. As these are also unistochastic matrices we have
\begin{align*}
\inf\Big{\lset} F(P) : P\in\M_{d}\text{ unistochastic}\Big{\rset} = \inf\Big{\lset} F(P) : P\in\M_{d}\text{ permutation matrix}\Big{\rset}.
\end{align*}
This finishes the first part. 

To prove the lemma note that we have
\begin{align*}
\inf_{\rho\in K_\sigma} Q_\sigma(\rho) = Q_\sigma(\tilde{\rho})
\end{align*}
for some minimizing full rank state $\tilde{\rho}\in\D^+_d$. Now consider some sequence $\lb \rho_n\rb_{n\in\N}\in K^\N_\sigma$ with $\rho_n\ra \tilde{\rho}$ as $n\ra\infty$ and such that the spectra of the $\rho_n$ are no permutations of the spectrum of $\sigma$. By the first part of the proof we find a sequence $\lb \rho'_n\rb_{n\in\N}\in K^\N_\sigma$ commuting with $\sigma$, such that 
\begin{align*}
Q_\sigma(\tilde{\rho})\leq Q_\sigma(\rho'_n)\leq Q_\sigma(\rho_n)\ra Q_\sigma(\tilde{\rho})
\end{align*}
as $n\ra \infty$.Thus $Q_\sigma(\rho'_n)\ra Q_\sigma(\tilde{\rho})$ as $n\ra\infty$. On the compact set $K_\sigma$ the sequence $\lb\rho'_n\rb_n$ has a converging subsequence $\lb\rho'_{n_k}\rb_{k\in\N}$ with $\rho'_{n_k}\ra \rho'\in K_\sigma$ as $k\ra\infty$. By continuity of $Q_\sigma$ we have $Q_\sigma(\rho')=Q_\sigma(\tilde{\rho}) = \inf_{\rho\in K_\sigma} Q_\sigma(\rho)$ and by continuity of the commutator $\rho\mapsto \lbr\rho,\sigma\rbr$ we have $\lbr\rho',\sigma\rbr=0$.

\end{proof} 

With this lemma we can prove our main result:

%
\begin{proof}[Proof of Theorem \ref{thm:a1dcfinal}]
By Lemma \ref{lem:infcommuting} we may restrict the optimization in \eqref{equ:BasicOpt} to states which commute with $\sigma$. Thus, we can repeat the construction of the compact set $K_\sigma$ (see \ref{equ:Kset}) for a minimizer $\tilde{\rho}\in\D^+_d$ with $\lbr\tilde{\rho},\sigma\rbr=0$. By construction $\tilde{\rho}$ lies in the interior of $K_\sigma$, which will be important for the following argument involving Lagrange-multipliers. 

To find necessary conditions on the minimizers of \eqref{equ:BasicOpt} we abbreviate $C:=\inf_{\rho\in K_\sigma} Q_\sigma(\rho)$ and note that $C>0$. 
To see this, note that we may extend $Q_\sigma(\rho)$ continuously to $1$ at $\sigma$, so there exists $\delta>0$ s.t. for $\|\rho-\sigma\|_1\leq\delta$ we
have $Q_\sigma(\rho)\geq\frac{1}{2}$ and for $\rho$ s.t. $\|\rho-\sigma\|_1>\delta$ we have $Q_\sigma(\rho)\geq\frac{\delta^2}{2\log\lb s_{\min}\lb\sigma^{-1}\rb\rb}$ Using Pinsker's inequality and $D(\rho\|\sigma)\leq\log\lb s_{\min}\lb\sigma\rb\rb$. For any $\rho\in K_\sigma$ with $\lbr\rho,\sigma\rbr=0$ and $\rho\neq \sigma$ have  
\begin{align*}
\frac{D(\sigma\|\rho)}{D(\rho\|\sigma)}\geq C
\end{align*}
which is equivalent to
\begin{equation}\label{eq1simpliinf}
S(\sigma)\leq C S(\rho)+C \sum_{i=1}^d r_i\log(s_i)-\sum_{i=1}^{d}s_i\log(r_i).
\end{equation}
Here $\lset r_i\rset^d_{i=1}$ denote the eigenvalues of $\rho\in K_\sigma$ (see \ref{equ:Kset}) fulfilling $\lbr \rho,\sigma\rbr=0$ and $\lset s_i\rset^d_{i=1}$ the eigenvalues of $\sigma$. As $\tilde{\rho}$ is a minimizer of \eqref{equ:BasicOpt} and commutes with $\sigma$ its spectrum is a minimizer of the right-hand-side of \eqref{eq1simpliinf} minimized over the set $\mathcal{S} := \lset r\in\R^d : \min_{i}(r_i)\geq s_\text{min}(\tilde{\rho})-\epsilon\rset\subset \R^d$ with $\epsilon$ chosen in the construction of $K_\sigma$ (see \eqref{equ:Kset}). We will now compute necessary conditions on the spectrum of $\tilde{\rho}$ using the formalism of Lagrange-multipliers (note that by construction the spectrum of $\tilde{\rho}$ lies in the interior of $\mathcal{S}$).

Consider the Lagrange function $F:\mathcal{S}\times \R\to\R$ given by
\begin{align*}
F(r_1,\ldots,r_d,\lambda)=CS(\rho)+C\sum_{i=1}^d r_i\log(s_i)-\sum_{i=1}^{d}s_i\log(r_i)+\lambda\lb\sum_{i=1}^d r_i-1\rb.
\end{align*}
The gradient of $F$ is given by:
\begin{equation}
\lbr\nabla F(r_1,\ldots,r_d,\lambda)\rbr_j=\begin{cases}
                         C(-\log(r_j)-1+\log(s_j))-\frac{s_j}{r_j}+\lambda &\quad 1\leq j\leq d\\    
       \sum_{i=1}^d r_i-1 &\quad j=d+1 \ 
                        \end{cases}
\end{equation}
By the formalism of Lagrange-multipliers any minimizer $r=(r_1,\ldots ,r_d)$ of the right-hand-side of \eqref{eq1simpliinf} in the interior of $\mathcal{S}$ has to fulfill $\nabla F(r_1,\ldots,r_d,\lambda)=0$ for some $\lambda\in\R$. Summing up the first $d$ of these equations (where the $j$th equation is multiplied with $r_j$) implies
\begin{align*}
\lambda=1+C(1+D(\rho\|\sigma)).
\end{align*}
Inserting this back into the equations $\lbr\nabla F(r_1,\ldots,r_d,\lambda)\rbr_j=0$ and using $u_j=\frac{r_j}{s_j}$ we obtain
\begin{equation}\label{eq3simpliinf}
u_j(1+C D(\rho\|\sigma))-1=Cu_j\log(u_j)
\end{equation}
for $1\leq j\leq d$. For fixed $D(\rho\|\sigma)$ there are only two values for $u_j$ solving the equations \eqref{eq3simpliinf}, as an affine functions (the left-hand-side) can only intersect a strictly convex function (the right-hand-side) in at most two points. Thus, for a minimizer $\lset r_i\rset^d_{i=1}$ of the right-hand-side of \eqref{eq1simpliinf} in the interior of $\mathcal{S}$ there are constants $c_1,c_2\in\R^+$ such that for each $i\in\lset 1,\ldots, d\rset$ either $r_i = c_1 s_i$ or $r_i = c_2 s_i$ holds. 

We have obtained the following conditions on the spectrum of the minimizer $\tilde{\rho}\in K_\sigma$ (fulfilling $\lbr\tilde{\rho},\sigma\rbr=0$) of \eqref{equ:BasicOpt}: There exist constants $c_1,c_2\in\R^+$ a permutation $\nu\in S_d$ (where $S_d$ denotes the group of permutations on $\lset 1,\ldots, d\rset$) and some $0\leq n\leq d$ such that the spectrum $r\in\R^+_d$ of $\tilde{\rho}$ fulfills $r_i = c_1 s_i$ for any $0\leq i\leq n$ and $r_i=c_2 s_i$ for any $n+1\leq i\leq d$. Note that the cases $c_1 = c_2=1$, $n=0$ and $n=d$ all correspond to the case $\rho=\sigma$ where we have $Q_\sigma(\sigma)=1$. Thus, we can exclude the cases $n=0$ and $n=d$ as long as we optimize over $c_1=c_2 =1$. Furthermore, note that we can use the normalization of $\tilde{\rho}$, i.e. $c_1\sum^n_{i=1}s_i + c_2\sum^d_{i=n+1}s_i=1$ to eliminate $c_2$. 
Given a permutation $\nu\in S_d$ and $n\in\{1,\ldots,d\}$, we define $p(\nu,n)=\sum^n_{i=1}s_{\nu(i)}$.
Inserting the above conditions into \eqref{equ:BasicOpt} and setting $c_1=x$ and $0<n<d$ yields
\begin{align}
\inf_{\rho\in K_\sigma} Q_\sigma(\rho) &= \inf_{\nu\in S_d}\inf_{1\leq n < d}\inf_{x\in \lbr 0,p(\nu,n)^{-1}\rbr} q_{p(\nu,n)}\lb xp(\nu,n)\rb\\
&= \inf_{\nu\in S_d}\inf_{1\leq n < d}\inf_{x\in \lbr 0,1\rbr} q_{p(\nu,n)}(x)
\label{equ:blub12}
\end{align}
where $q_y:\lbr 0,1\rbr\ra \R$ denotes the continuous extension of $x\mapsto\frac{D_2(y\| x)}{D_2(x\| y)}$ (see \eqref{equ:ContExt2}). By Lemma \ref{lem:concaveFunc} in the appendix the function $y\mapsto q_y(x)$ is continuous and quasi-concave and hence the minimum over any convex and compact set is attained at the boundary. Thus, we have 
\begin{align*}
q_{s_{\text{min}}(\sigma)}(x)\geq \inf_{\nu\in S_d}\inf_{1\leq n\leq d}q_{p(\nu,n)}(x)\geq \inf_{y\in \lbr s_{\text{min}}(\sigma), 1-s_\text{min}(\sigma)\rbr}q_{y}(x) = q_{s_{\text{min}}(\sigma)}(x)
\end{align*}
using $q_{1-s_\text{min}(\sigma)}(x)=q_{s_\text{min}(\sigma)}(x)$ for any $x\in\lbr 0,1\rbr$. Inserting this into \eqref{equ:blub12} leads to
\begin{align*}
\inf_{\rho\in K_\sigma}Q_\sigma(\rho) = \inf_{x\in\lbr 0, 1\rbr} q_{s_{\text{min}}(\sigma)}(x). 
\end{align*}

%
\end{proof}

Lemma \ref{lem:infcommuting} implies that the log-Sobolev-1 constant of the depolarizing channels coincides with the classical one of the random walk on the complete graph with $d$ vertices and distribution given by the spectrum of $\sigma$. This constant  has been shown to imply other inequalities, such as in \cite[Proposition 3.13]{Sammer05aspectsof}. Using this result, Theorem \ref{thm:a1dcfinal} implies a refined transportation inequality on graphs.

Using the correspondence with the classical log-Sobolev-1 constant of a random walk on the complete graph, we may apply \cite[Example 3.10]{bobkov2006modified}, which proves:

\begin{cor}
Let $\liou_\sigma:\M_d\to\M_d$ be the depolarizing Liouvillian with full rank fixed point $\sigma\in\D_d$ as defined in \eqref{equ:DepLiou}. Then we have
\begin{align*}
\alpha_1(\liou_{\sigma})\geq\frac{1}{2}+\sqrt{s_{\min}\lb\sigma\rb(1-s_{\min}\lb\sigma\rb)}
\end{align*}
with equality iff $s_{\min}\lb\sigma\rb=\frac{1}{2}$. Again $s_\text{min}(\sigma)$ denotes the minimal eigenvalue of $\sigma$.
\end{cor}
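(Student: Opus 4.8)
The plan is to reduce the statement to a one-variable inequality via Theorem~\ref{thm:a1dcfinal} and then recognize the resulting quantity as a classical log-Sobolev constant. Write $y:=s_{\min}(\sigma)$. Theorem~\ref{thm:a1dcfinal} gives $\alpha_1(\liou_\sigma)=\min_{x\in[0,1]}\tfrac12\big(1+q_y(x)\big)$, so the corollary is equivalent to
\[
\min_{x\in[0,1]} q_y(x)\;\geq\;2\sqrt{y(1-y)},
\]
and I must show that this holds with equality exactly when $y=\tfrac12$. By Lemma~\ref{lem:infcommuting} (together with the reduction in the proof of Theorem~\ref{thm:a1dcfinal}, where the optimum is seen to depend only on the pair $(y,1-y)$ through the quasi-concavity of $y\mapsto q_y(x)$) this number is precisely the log-Sobolev-1 constant of the classical two-state random walk with stationary distribution $(y,1-y)$.

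With the correspondence in hand, the quickest route is to cite \cite[Example~3.10]{bobkov2006modified}, which evaluates exactly this two-point modified log-Sobolev constant and produces the bound $\tfrac12+\sqrt{y(1-y)}$ along with its equality clause. The only care required is to match normalization conventions, namely the factor of $2$ relating the entropy-production rate in Definition~\ref{defn:LS-1} to the Dirichlet-form normalization of \cite{bobkov2006modified}, and to confirm that the symmetric weight $y=\tfrac12$ is the unique case in which the bound is saturated.

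If instead a self-contained argument is wanted, I would prove the displayed inequality by hand. Set $\psi(x):=D_2(y\|x)-2\sqrt{y(1-y)}\,D_2(x\|y)$ on $[0,1]$; since $D_2(x\|y)>0$ for $x\neq y$, the inequality $q_y(x)\geq 2\sqrt{y(1-y)}$ is equivalent to $\psi(x)\geq 0$. A direct computation gives $\psi(y)=\psi'(y)=0$ and $\psi''(y)=\big(1-2\sqrt{y(1-y)}\big)/\big(y(1-y)\big)\geq 0$, using $2\sqrt{y(1-y)}\leq 1$, so $x=y$ is a stationary minimum of $\psi$. The remaining work is to upgrade this to global positivity on all of $[0,1]$, exploiting that $D_2(y\|x)\to+\infty$ at the endpoints while $D_2(x\|y)$ stays bounded, and a monotonicity analysis of $\psi$ on each side of $x=y$.

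The delicate point is the equality characterization, which I would handle by combining global strict positivity of $\psi$ away from $x=y$ with a Taylor expansion of $q_y$ at $x=y$. The latter yields $q_y'(y)=\tfrac{y(1-y)}{3}\big((1-y)^{-2}-y^{-2}\big)$, which vanishes if and only if $y=\tfrac12$. Hence for $y\neq\tfrac12$ the minimizer of $q_y$ lies strictly away from $x=y$, where $\psi>0$ forces $\min_x q_y(x)>2\sqrt{y(1-y)}$, so the bound is strict; while for $y=\tfrac12$ the minimizer is $x=\tfrac12$ with $q_{1/2}(\tfrac12)=1=2\sqrt{y(1-y)}$, so the bound is tight. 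Establishing the global strict inequality $D_2(y\|x)>2\sqrt{y(1-y)}\,D_2(x\|y)$ for all $x\neq y$ is the main obstacle in the self-contained approach, and it is exactly what the explicit equality statement of \cite[Example~3.10]{bobkov2006modified} lets one bypass.
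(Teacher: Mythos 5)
Your primary argument—reducing via Theorem~\ref{thm:a1dcfinal} (equivalently Lemma~\ref{lem:infcommuting}) to the classical log-Sobolev-1 constant of a two-point/complete-graph random walk and then citing \cite[Example~3.10]{bobkov2006modified} for the value $\tfrac12+\sqrt{y(1-y)}$ and its equality case—is exactly the paper's proof. The self-contained sketch you append is not needed and, as you note yourself, leaves the global inequality unproven, but the citation route is complete and correct.
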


\section{Application: Improved concavity of von-Neumann entropy}

It is a well-known fact that the von-Neumann entropy $S(\rho)=-\text{tr}\lbr \rho\log(\rho)\rbr$ is concave in $\rho$. Using Theorem \ref{thm:a1dcfinal} we can improve the concavity inequality:

\begin{thm}[Improved concavity of the von-Neumann entropy]\label{thm:improvedconcav}
For $\rho,\sigma\in\D_d$ and $q\in[0,1]$ we have
\begin{align*}
S((1-q)\sigma+q\rho)-(1-q)S(\sigma)-q S(\rho)\geq \\ \max\begin{cases} q(1-q^{c(\sigma)})D(\rho\|\sigma)\\
(1-q)(1-(1-q)^{c(\rho)})D(\sigma\|\rho)\end{cases},
\end{align*}
with 
\begin{align*}
c(\sigma)=\min_{x\in[0,1]}\frac{D_2(s_{\min}(\sigma)\|x)}{D_2(x\|s_{\min}(\sigma))}
\end{align*}
and $c(\rho)$ defined in the same way.

\end{thm}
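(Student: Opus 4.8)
The plan is to reduce the whole statement to the relative-entropy decay inequality governed by the log-Sobolev-1 constant computed in Theorem~\ref{thm:a1dcfinal}, by rewriting the concavity gap exactly. Writing $\omega := (1-q)\sigma + q\rho$ and $\Phi(q) := S(\omega) - (1-q)S(\sigma) - qS(\rho)$ for the left-hand side, the first step is to prove the algebraic identity
\begin{equation}
\Phi(q) = q\,D(\rho\|\sigma) - D(\omega\|\sigma).
\end{equation}
This is a one-line computation: expand $D(\omega\|\sigma) = -S(\omega) - \tr[\omega\log\sigma]$, use linearity to write $\tr[\omega\log\sigma] = -(1-q)S(\sigma) + q\tr[\rho\log\sigma]$, and substitute $\tr[\rho\log\sigma] = -S(\rho) - D(\rho\|\sigma)$; everything collapses to the claimed form, and as a sanity check both sides vanish at $q\in\{0,1\}$.

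Second, I would recognize $\omega$ as an output of the generalized depolarizing channel. Since $T^\sigma_t(\rho) = (1-e^{-t})\sigma + e^{-t}\rho$, the choice $t = -\log q$ (so that $e^{-t} = q$, valid for $q\in(0,1)$) gives $\omega = T^\sigma_t(\rho)$. The defining decay property of the log-Sobolev-1 constant, inequality~\eqref{equ:LSInequInt}, then reads $D(\omega\|\sigma) \leq q^{2\alpha_1(\liou_\sigma)} D(\rho\|\sigma)$, and Theorem~\ref{thm:a1dcfinal} together with the definition of $c(\sigma)$ gives $2\alpha_1(\liou_\sigma) = 1 + c(\sigma)$. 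Substituting $D(\omega\|\sigma) \leq q^{1+c(\sigma)}D(\rho\|\sigma)$ into the identity yields $\Phi(q) \geq q\,D(\rho\|\sigma) - q^{1+c(\sigma)}D(\rho\|\sigma) = q(1-q^{c(\sigma)})D(\rho\|\sigma)$, which is the first branch of the maximum.

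Third, the second branch follows from the symmetry $\sigma\leftrightarrow\rho$, $q\leftrightarrow 1-q$: the same $\omega$ equals $T^\rho_{t'}(\sigma)$ with $e^{-t'} = 1-q$, the mirror identity $\Phi(q) = (1-q)D(\sigma\|\rho) - D(\omega\|\rho)$ holds by the same computation, and $D(\omega\|\rho) \leq (1-q)^{1+c(\rho)}D(\sigma\|\rho)$ gives $\Phi(q) \geq (1-q)(1-(1-q)^{c(\rho)})D(\sigma\|\rho)$. Since both lower bounds hold simultaneously, their maximum is a valid lower bound, which is the assertion.

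The main obstacle is not the core argument---which is the identity plus a single application of Theorem~\ref{thm:a1dcfinal}---but the boundary cases, since the log-Sobolev machinery requires a full-rank fixed point whereas the theorem allows arbitrary $\rho,\sigma\in\D_d$. For full-rank $\sigma$ the first branch applies verbatim; if $\sigma$ is rank-deficient then $s_\text{min}(\sigma) = 0$ forces $c(\sigma) = 0$ (because $D_2(0\|x)$ is finite while $D_2(x\|0) = +\infty$ for $x>0$), so the coefficient $q(1-q^{c(\sigma)})$ vanishes and that branch degenerates to ordinary concavity $\Phi(q)\geq 0$; the symmetric remark covers rank-deficient $\rho$. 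I would treat the endpoints $q\in\{0,1\}$ separately (both sides are zero) and, if a fully general statement is wanted, pass from the full-rank identity to arbitrary $\sigma$ by a continuity argument, adopting the usual convention for the degenerate products when a relative entropy is infinite.
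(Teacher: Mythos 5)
Your proposal is correct and follows essentially the same route as the paper: identify $(1-q)\sigma+q\rho$ as $e^{t\liou_\sigma}(\rho)$ with $t=-\log q$, apply the decay inequality \eqref{equ:LSInequInt} with the constant from Theorem~\ref{thm:a1dcfinal}, rearrange, and symmetrize. Your explicit identity $\Phi(q)=q\,D(\rho\|\sigma)-D(\omega\|\sigma)$ is just the paper's ``rearranging and expanding'' step made transparent, and your treatment of the rank-deficient and endpoint cases matches the remark the paper makes after the theorem statement.
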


 Note that this bound becomes trivial if both $\sigma$ and $\rho$ are not of full rank (as we have $c(\rho)=c(\sigma)=0$ in this case). However, as long as $D(\rho\|\sigma)$ or $D(\sigma\|\rho)<\infty$, we may still get a bound by restricting both density matrices to the support of $\sigma$ or $\rho$, respectively.

\begin{proof}
 Note that for the Liouvillian $\liou := -\log(q)\liou_\sigma$ we have:
 \begin{align*}
e^{\liou}(\rho)=q\rho+(1-q)\sigma.
 \end{align*}
By Theorem \ref{thm:a1dcfinal} and \eqref{equ:LSInequInt} we have
\begin{equation}\label{eq1improvedconcav}
D\lb e^{\liou}(\rho)\|\sigma\rb\leq e^{(1+c(\sigma))\log(q)}D\lb\rho\|\sigma\rb
\end{equation}
Rearranging and expanding the terms in \eqref{eq1improvedconcav} we get
\begin{align*}
S(q\rho + (1-q)\sigma)&\geq (1-q)S(\sigma)-q\text{tr}\lbr\rho\log(\sigma)\rbr + q^{1+c(\sigma)}D(\rho\|\sigma) \\
&=(1-q)S(\sigma)+q S(\rho)+q(1-q^{c(\sigma)})D(\rho\|\sigma). 
\end{align*}
Interchanging the roles of $\rho$ and $\sigma$ in the above proof gives the second case under the maximum.

\end{proof}

In \cite{kim2014bounds} another improvement on the concavity of the von-Neumann entropy is shown:
\begin{align}
S((1-q)\sigma+q\rho)-(1-q)S(\sigma)-q S(\rho)&\geq\frac{q(1-q)}{(1-2q)^2}\max\begin{cases} D(\rho_{\text{avg}}\|\rho_\text{rev})\\ D(\rho_{\text{rev}}\|\rho_\text{avg}) \end{cases} \label{equ:KimBound1} \\
&\geq \frac{1}{2}q(1-q)\|\rho - \sigma\|^2_1 \label{equ:KimBound2}
\end{align}
where $\rho_\text{avg} = (1-q)\sigma+q\rho$ and $\rho_\text{rev} = (1-q)\rho+q\sigma$. Note that this bound is valid for all states $\rho,\sigma\in\D_d$ while our bound in Theorem \ref{thm:improvedconcav} 
becomes trivial unless the support $\rho$ is contained in the support of $\sigma$ or the other way around. We will therefore consider only full rank states in the following analysis.

By simple numerical experiments our bound from Theorem \ref{thm:improvedconcav} seems to be worse than \eqref{equ:KimBound1}. 
However, one can argue that \eqref{equ:KimBound1} is not much simpler than the left-hand-side itself. 
In particular the dependence on $\rho$ and $\sigma$ is only implicit via the relative entropy between $\rho_\text{avg}$ and $\rho_\text{rev}$. 
Our bound from Theorem \ref{thm:improvedconcav} depends on some spectral data (in terms of the smallest eigenvalues of $\rho$ or $\sigma$), 
but whenever this is given, we have a bound for any $q\in\lbr 0,1\rbr$ in terms of the relative entropies of $\rho$ and $\sigma$.

\begin{figure}
\centering
\begin{minipage}{.5\textwidth}
  \centering
  \includegraphics[width=6.5cm]{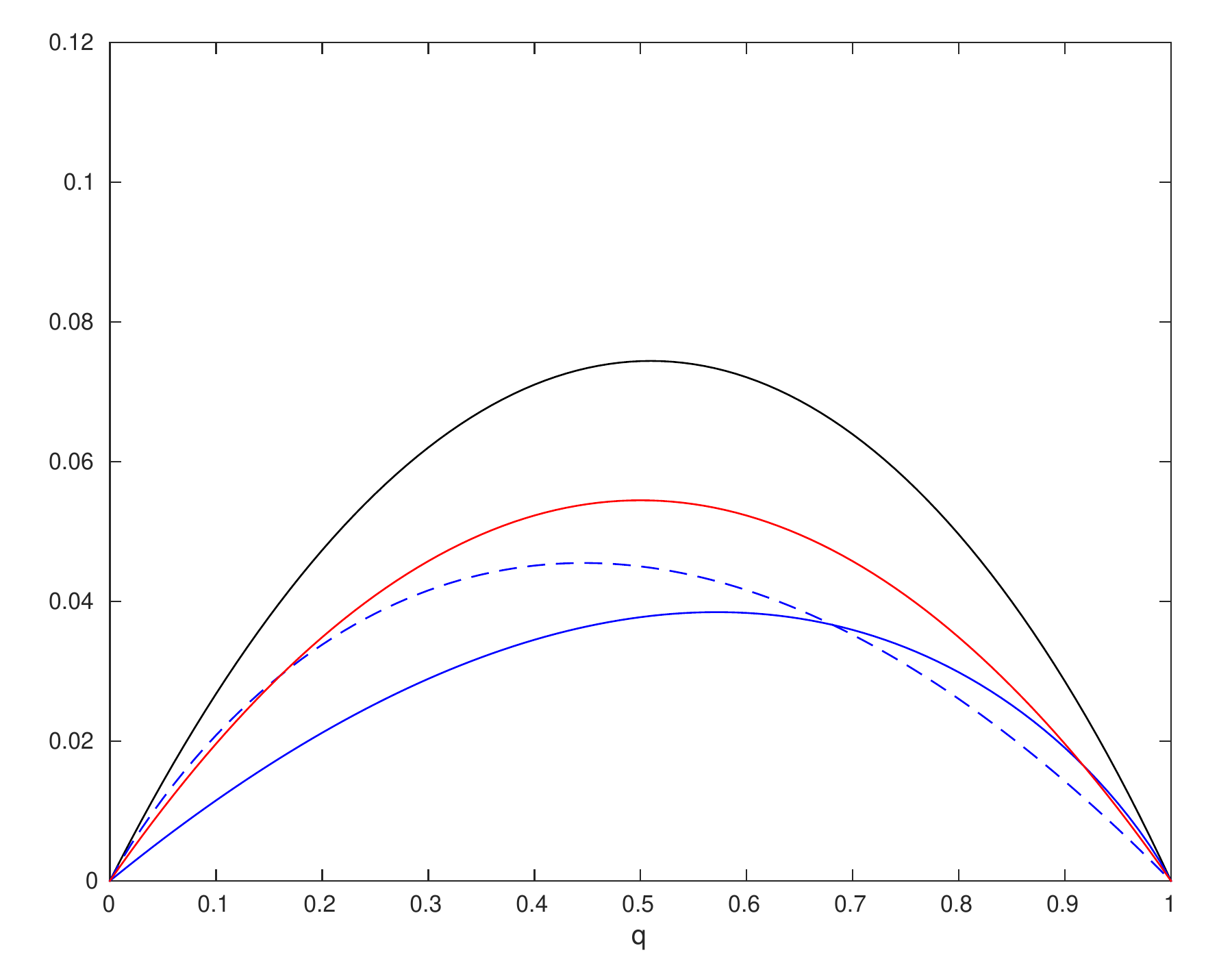}
  \captionof*{figure}{(a)}
\end{minipage}%
\begin{minipage}{.5\textwidth}
  \centering
  \includegraphics[width=6.5cm]{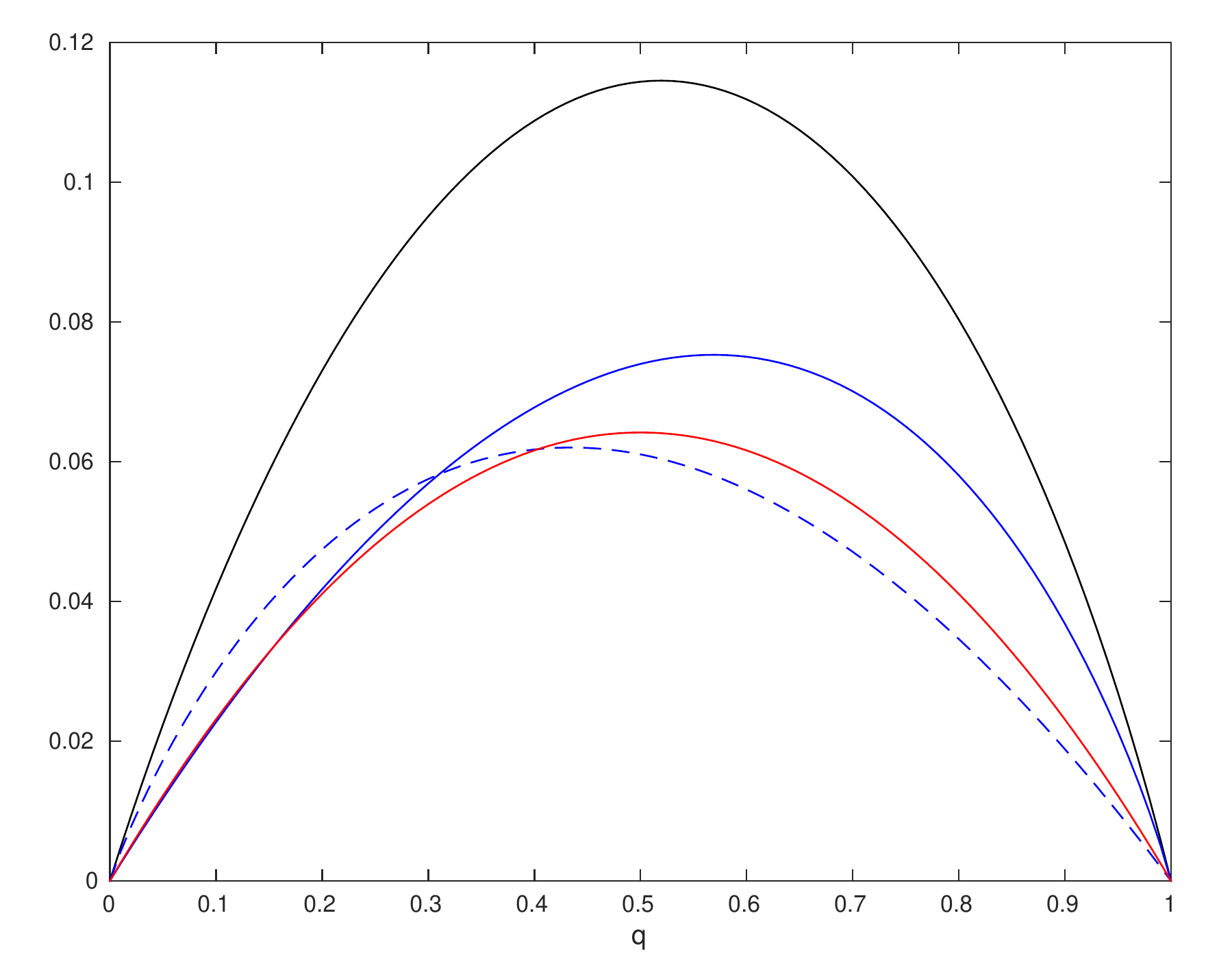}
  \captionof*{figure}{(b)}
\end{minipage}
\caption{Comparison of bound \eqref{equ:KimBound2} (red) and the bound from Theorem \ref{thm:improvedconcav} (blue, where both choices of the ordering of $\rho$ and $\sigma$ are plotted) and the exact value $S((1-q)\sigma+q\rho)-(1-q)S(\sigma)-q S(\rho)$ (black) two pairs of randomly generated $10\times 10$ quantum states and $q\in\lbr 0,1\rbr$.}
\label{fig:Comp}
\end{figure} 

Again we can do simple numerical experiments to compare the bounds \eqref{equ:KimBound2} and Theorem \ref{thm:improvedconcav}. 
Recall that our bound is given in terms of the relative entropy and \eqref{equ:KimBound2} in terms of the trace norm.
In Figure \ref{fig:Comp} the bounds are compared for randomly generated quantum states in dimension $d=10$. 
These plots show that the bounds are \emph{not} comparable and depending of the choice of the states the bound from Theorem \ref{thm:improvedconcav} will perform better than \eqref{equ:KimBound2} or vice versa. Note that for $q$ close to $0$ or $1$ our bound seems to perform better in both Figures. This is to be expected as $\alpha_1\lb\Lm_\sigma\rb$ is defined as the optimal constant $\alpha$ bounding the entropy production \eqref{equ:EntrProd} (in $t=0$) by $-2\alpha D(\rho\|\sigma)$. Therefore, Theorem \ref{thm:improvedconcav} should be the optimal bound (in terms of relative entropy) for $q$ near $0$ or $1$.

Note that by applying Pinsker's inequality:
\begin{align}
D\lb\rho\|\sigma\rb\geq\frac{1}{2}\|\rho-\sigma\|_1^2 
\label{equ:Pinsker}
\end{align} 
for states $\rho,\sigma\in\D_d$ to our bound from Theorem \ref{thm:improvedconcav} we can obtain an improvement on the 
concavity inequality in terms of the trace-distance similar to $\eqref{equ:KimBound2}$. Unfortunately a simple computation shows that the resulting trace-norm bound is always worse than $\eqref{equ:KimBound2}$. In the next section we will show that Pinsker's inequality can be improved in the case where the second argument in the relative entropy is fixed (which is the case in the bound from Theorem \ref{thm:improvedconcav}). This will lead to an additional improvement of the trace-norm bound obtained from Theorem \ref{thm:improvedconcav}, such that in some (but only very few) cases the bound becomes better than \eqref{equ:KimBound2}.

%
%

\section{State-Dependent Optimal Pinsker's Inequality}

Pinsker's inequality \eqref{equ:Pinsker} can be applied to the bound in Theorem \ref{thm:improvedconcav} to get an improvement of the concavity in terms of the trace distance of the two density matrices. 
It can also be applied to \eqref{equ:LSInequInt} to get a mixing time bound~\cite{Kastoryanosob} for the depolarizing channel. 
Note that in both of these cases the second argument of the relative entropy is fixed. Other improvements have been considered in the literature~\cite{audenaert2005continuity}, but here we will improve Pinsker's inequality in 
terms of the second argument of the relative entropy. 
More specifically we compute the optimal constant $C\lb\sigma\rb$ (depending on $\sigma$) such that $D\lb\rho\|\sigma\rb\geq C\lb\sigma\rb\|\rho-\sigma\|_1^2$ holds when $\sigma$ has full rank.

We will follow a strategy similar to the one pursued in \cite{ordentlich} in proving this, where the analogous problem was
considered for classical probability distributions. 
For a state $\rho\in\D_d$ let $s(\rho) = (s_1(\rho),\ldots,s_d(\rho))$ denote its vector of eigenvalues decreasingly ordered.
 \begin{lem}\label{inffixdis}
 Let $\sigma\in\D_d^+$ and for $A\subseteq\lset 1,\ldots ,d\rset$ define $P_{\sigma}\lb A\rb =\sum\limits_{i\in A}s_i(\sigma)$. Then we have for $\epsilon>0$:
 \begin{align*}
 \min\limits_{\rho:\|\rho-\sigma\|_{1}\geq\epsilon}D\lb\rho\|\sigma\rb =\min\limits_{A\subseteq\lset 1,\ldots ,d\rset}D_2\lb P_{\sigma}\lb A\rb +\epsilon\|P_{\sigma}\lb A\rb \rb 
 \end{align*}

 \end{lem}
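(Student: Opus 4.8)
The plan is to follow the classical strategy of \cite{ordentlich}, first reducing the quantum minimization to a classical problem over the spectra of $\rho$ and $\sigma$ and then solving the resulting finite optimization explicitly. As a preliminary step I would argue that the constraint is active at the minimizer, i.e. that it suffices to minimize $D(\rho\|\sigma)$ over $\{\rho:\|\rho-\sigma\|_1=\epsilon\}$. Indeed, whenever $\|\rho-\sigma\|_1>\epsilon$ one may interpolate along $\rho_t=(1-t)\sigma+t\rho$; by convexity of $D(\cdot\|\sigma)$ one has $D(\rho_t\|\sigma)\leq t\,D(\rho\|\sigma)$, which strictly decreases as $t\searrow$, while $\|\rho_t-\sigma\|_1=t\|\rho-\sigma\|_1$ decreases continuously, so the infimum is attained on the boundary sphere. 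Existence of a minimizer follows from continuity of $D(\cdot\|\sigma)$ on the compact constraint set together with $D(\rho\|\sigma)<\infty$ for all $\rho$, since $\sigma$ has full rank.

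The core reduction is to replace $\rho$ by a state commuting with $\sigma$. For the converse bound I would invoke monotonicity of the relative entropy under the two-outcome measurement channel associated with the optimal (Helstrom) projector $\Pi$ onto the positive part of $\rho-\sigma$, which satisfies $\tr[\Pi(\rho-\sigma)]=\tfrac12\|\rho-\sigma\|_1$. Data processing then gives $D(\rho\|\sigma)\geq D_2(\tr[\Pi\rho]\,\|\,\tr[\Pi\sigma])$, and writing $q=\tr[\Pi\sigma]$ and using that $x\mapsto D_2(q+x\|q)$ is strictly increasing for $x>0$, the active constraint forces $D(\rho\|\sigma)\geq D_2(q+\epsilon\|q)$ in the normalisation of the lemma. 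For the matching upper bound I would, for each subset $A$, construct a state $\rho_A$ diagonal in the eigenbasis of $\sigma$ whose spectrum is \emph{proportional} to that of $\sigma$ on $A$ and on $A^c$ separately, rescaled so that $P_{\rho_A}(A)=P_\sigma(A)+\epsilon$ while $\|\rho_A-\sigma\|_1=\epsilon$. By the log-sum inequality this block-proportional choice is precisely the one for which coarse-graining to $\{A,A^c\}$ is lossless, giving $D(\rho_A\|\sigma)=D_2(P_\sigma(A)+\epsilon\|P_\sigma(A))$ and hence $\min_\rho D(\rho\|\sigma)\leq\min_A D_2(P_\sigma(A)+\epsilon\|P_\sigma(A))$.

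It remains to match the two bounds, and this is where I expect the main obstacle to lie. The converse obtained from the Helstrom measurement produces a value $q=\tr[\Pi\sigma]$ that ranges over the whole interval $[0,1]$, whereas the right-hand side only optimizes over the discrete set $\{P_\sigma(A)\}_A$, and in general $\min_{q\in[0,1]}D_2(q+\epsilon\|q)$ lies strictly below $\min_A D_2(P_\sigma(A)+\epsilon\|P_\sigma(A))$. The crux is therefore to show that at the optimum one may take $\Pi$ to commute with $\sigma$, so that $q$ is forced to equal a subset mass $P_\sigma(A)$. I would attempt this by arguing that the minimizer $\rho$ itself commutes with $\sigma$: for fixed diagonal of $\rho$ in the eigenbasis of $\sigma$ the term $\tr[\rho\log\sigma]$ is fixed, so $D(\rho\|\sigma)=-S(\rho)-\tr[\rho\log\sigma]$ is minimized by \emph{maximizing} $S(\rho)$, which suppresses off-diagonal coherences; the competing effect is that such coherences also enlarge $\|\rho-\sigma\|_1$, and the delicate point is to verify that, once the constraint $\|\rho-\sigma\|_1=\epsilon$ is reimposed, the commuting configuration remains optimal. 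Granting commutativity, the problem becomes the purely classical one of minimizing $D(r\|s)$ over probability vectors with $\sum_i|r_i-s_i|=\epsilon$; partitioning coordinates into the increase-set $A=\{i:r_i>s_i\}$ and its complement and applying the log-sum inequality within each block reduces $D(r\|s)$ to $D_2(P_\sigma(A)+\epsilon\|P_\sigma(A))$, after which minimizing over the finitely many subsets $A$ closes the argument.
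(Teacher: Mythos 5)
Your overall architecture matches the paper's: a data-processing lower bound through the two-outcome measurement built from the optimal (Helstrom) projector, and a matching upper bound from states that are block-proportional to $\sigma$ on $A$ and on $A^c$. (Both you and the displayed statement carry a factor-of-two slip: with $\tr[\Pi(\rho-\sigma)]=\tfrac{1}{2}\|\rho-\sigma\|_1$ the quantity one actually obtains is $D_2(P_\sigma(A)+\epsilon/2\,\|\,P_\sigma(A))$, which is what the paper's proof derives and what the subsequent theorem uses; the ``$+\epsilon$'' in the lemma is a typo, and note that your achiever with $P_{\rho_A}(A)=P_\sigma(A)+\epsilon$ has $\|\rho_A-\sigma\|_1=2\epsilon$, not $\epsilon$.)

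The genuine gap is the one you flag yourself and then bracket with ``granting commutativity'': showing that the value $q=\tr[\Pi\sigma]$ produced by the Helstrom measurement can be taken to be a subset mass $P_\sigma(A)$. Your proposed route --- fix the diagonal of $\rho$ in the eigenbasis of $\sigma$ and maximize $S(\rho)$ to suppress coherences --- runs into exactly the obstruction you name: dephasing in the eigenbasis of $\sigma$ is a channel fixing $\sigma$, so it can only shrink $\|\rho-\sigma\|_1$, and the dephased state may leave the feasible set; nothing in your sketch restores the constraint. The paper closes this hole with Lidskii's theorem: writing $s(\rho-\sigma)=s(\sigma)-Ls(\rho)$ with $L$ doubly stochastic, one builds a state $\rho'$ commuting with $\sigma$, with eigenvalues $Ls(\rho)$ attached to the eigenvectors of $\sigma$, so that $\|\rho'-\sigma\|_1=\|s(\rho-\sigma)\|_1=\|\rho-\sigma\|_1$ holds with equality (not merely $\leq$). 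Its optimal discriminating projector $Q'$ is then a sum of eigenprojectors of $\sigma$, which forces $\tr[Q'\sigma]=P_\sigma(A)$ for some subset $A$ and converts the continuous parameter $q$ into one of the finitely many admissible values. (Even in the paper, matching the measurement statistics of $\rho$ under $Q$ with those of $\rho'$ under $Q'$ is the delicate step, but the Lidskii construction is the ingredient that makes the subset identification possible at all.) Without something of this kind, your lower bound is only $\min_{q\in[0,1]}D_2(q+\epsilon/2\|q)$, which for generic $\sigma$ is strictly smaller than the claimed minimum over subsets --- by Hoeffding's bound it degrades to the ordinary Pinsker constant $\phi(1/2)=2$ --- so the gap is not cosmetic.
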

\begin{proof}
Let $\rho\in\D_d$ be such that $\|\rho-\sigma\|_1=\delta$, with $\delta\geq\epsilon$. By Lidskii's theorem~\cite[Corollary III.4.2]{bhatia1997matrix}, we have:
\begin{align*}
s\lb\rho-\sigma\rb =s\lb\sigma\rb -Ls\lb\rho\rb , 
\end{align*}
where $L$ is a doubly stochastic matrix. Define $\rho'$ to be the state which has eigenvalues $Ls\lb\rho\rb $ and commutes with $\sigma$. Then we have:
\begin{align*}
\|\rho-\sigma\|_1=\|\rho'-\sigma\|_1 
\end{align*}
By the operational interpretation for the $1-$norm~\cite[Theorem 9.1]{nielsen2000quantum} there exist hermitian projections $Q,Q'\in\M_n$ such that
\begin{align}
2\tr[Q\lb\rho-\sigma\rb ] = \|\rho-\sigma\|_1 = \|\rho'-\sigma\|_1 =2\tr[Q'\lb\rho'-\sigma\rb ] .\label{equ:Norm2}
\end{align}
Now define the quantum channel $T:\M_d\to\M_2$ given by:
\begin{align*}
T\lb\rho\rb =\tr[Q\rho]\ketbra{0}+\tr[\lb\one-Q\rb \rho]\ketbra{1}.
\end{align*}
where $\ket{0},\ket{1}$ is an orthonormal basis of $\C^2$.
By the data processing inequality we have:
\begin{equation}\label{datapink}
 D\lb\rho\|\sigma\rb \geq D\lb T\lb\rho\rb \|T\lb\sigma\rb \rb 
\end{equation}
It is easy to see that the image of $Q'$ must be spanned by eigenvectors of $\sigma$.
Thus, we may associate a subset $A\subseteq\lset 1,\ldots ,d\rset$ to the projector $Q'$ indicating the eigenvectors of $\sigma$ spanning this subspace. 
Using \eqref{equ:Norm2} and the assumption that $\|\rho-\sigma\|_1 = \delta$ we have:
\begin{align*}
 \tr[Q'\rho' ]=P_{\sigma}\lb A\rb +\frac{\delta}{2}.
\end{align*}
Also observe that 
\begin{align*}
D\lb T\lb\rho\rb \|T\lb\sigma\rb \rb =D_2\lb P_{\sigma}\lb A\rb +\frac{\delta}{2}\Big{\|}P_{\sigma}\lb A\rb \rb \geq D_2\lb P_{\sigma}\lb A\rb +\frac{\epsilon}{2}\Big{\|}P_{\sigma}\lb A\rb \rb
\end{align*}
as the binary relative entropy is convex and $\delta\geq \epsilon$ was assumed. With \eqref{datapink} we have:
\begin{equation}\label{lowerpinsk}
 \min\limits_{\rho:\|\rho-\sigma\|_{1}\geq\epsilon}D\lb\rho\|\sigma\rb \geq\min\limits_{A\subseteq\lset 1,\ldots ,d\rset}D_2\lb P_{\sigma}\lb A\rb +\frac{\epsilon}{2}\Big{\|}P_{\sigma}\lb A\rb \rb 
\end{equation}
Now given any $A\subseteq\lset 1,\ldots ,d\rset$ such that $P_{\sigma}\lb A\rb +\frac{\epsilon}{2}<1$ (otherwise $D_2\lb P_{\sigma}\lb A\rb +\epsilon\|P_{\sigma}\lb A\rb \rb =+\infty$) , define a state $\tau\in\D_d$ which commutes with $\sigma$
and has spectrum:
\begin{align*}
s_i\lb\tau\rb =\begin{cases}
                         \frac{\lb P_{\sigma}\lb A\rb +\epsilon/2\rb s_i(\sigma)}{P_{\sigma}\lb A\rb } &\quad \text{ for } i\in A\\    
       \frac{\lb1-P_{\sigma}\lb A\rb -\epsilon/2\rb s_i(\sigma)}{1-P_{\sigma}\lb A\rb } &\quad \text{ else.} \ 
                        \end{cases}
\end{align*}
Note that $\|\sigma-\tau\|_1=\epsilon$ and $D\lb\tau\|\sigma\rb =D_2\lb P_{\sigma}\lb A\rb +\frac{\epsilon}{2}\|P_\sigma\lb A\rb \rb $, i.e.
the lower bound in \eqref{lowerpinsk} is attained.

\end{proof}
We define the function $\phi:[0,\frac{1}{2}]\to\R$ as 
\begin{align}
\phi\lb p\rb =\frac{1}{1-2p}\log\lb\frac{1-p}{p}\rb 
\label{equ:phi}
\end{align}
extended continuously by $\phi\lb\frac{1}{2}\rb =2$. Furthermore for any $\sigma\in\D_d$ we define 
\begin{align}
\pi\lb\sigma\rb =\max\limits_{A\subseteq\lset 1,\ldots ,d\rset}\min\left\{\frac{1}{2},\sum\limits_{i\in A}s_i(\sigma)\right\}.
\label{equ:pi}
\end{align}
With essentially the same proof as given in \cite{ordentlich} for the classical case we obtain the following improvement on Pinsker's inequality:

\begin{thm}[State-dependent Pinsker's Inequality]
For $\sigma,\rho\in\D_d$ we have:
\begin{align}
D\lb\rho\|\sigma\rb \geq\frac{\phi\lb\pi\lb\sigma\rb \rb }{4}\|\rho-\sigma\|_1^2 
\label{equ:ImproPinsker}
\end{align}
with $\phi$ as in \eqref{equ:phi} and $\pi\lb\sigma\rb$ as in \eqref{equ:pi}. Moreover, this inequality is tight.
\label{thm:ImproPinsker}
\end{thm}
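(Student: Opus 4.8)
The plan is to reduce the inequality to a one-dimensional optimization over the trace-norm distance $\epsilon := \|\rho - \sigma\|_1$, exactly as in the classical treatment of \cite{ordentlich}. Since the claimed bound has the form $D(\rho\|\sigma) \geq C(\sigma)\epsilon^2$ with $C(\sigma) = \phi(\pi(\sigma))/4$, the natural strategy is to prove the stronger, sharper statement that for every fixed $\epsilon > 0$,
\begin{align*}
\min_{\rho:\|\rho-\sigma\|_1 \geq \epsilon} D(\rho\|\sigma) \geq \frac{\phi(\pi(\sigma))}{4}\,\epsilon^2,
\end{align*}
and then observe that an arbitrary $\rho$ with $\|\rho-\sigma\|_1 = \epsilon$ automatically satisfies this. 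The crucial input is Lemma \ref{inffixdis}, which already collapses the left-hand side to a minimization of the binary relative entropy $D_2\big(P_\sigma(A) + \tfrac{\epsilon}{2}\,\big\|\,P_\sigma(A)\big)$ over all subsets $A \subseteq \{1,\ldots,d\}$. So the entire problem becomes purely classical/scalar once Lemma \ref{inffixdis} is invoked.

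First I would fix a subset $A$ and write $p := P_\sigma(A) \in (0,1)$, reducing the question to lower-bounding $D_2(p + \tfrac{\epsilon}{2}\|p)$ by $\tfrac{\phi(\cdot)}{4}\epsilon^2$. The key scalar fact to establish is that for the binary relative entropy one has
\begin{align*}
D_2\Big(p + \frac{\epsilon}{2}\,\Big\|\,p\Big) \geq \frac{\phi(p)}{4}\,\epsilon^2,
\end{align*}
where $\phi$ is the function defined in \eqref{equ:phi}; this is the one-variable Pinsker-type estimate whose optimal constant is $\phi(p)$. I expect this to follow by showing that $g(\epsilon) := D_2(p + \tfrac{\epsilon}{2}\|p) - \tfrac{\phi(p)}{4}\epsilon^2$ vanishes to second order at $\epsilon = 0$ and is nonnegative, via a convexity or second-derivative argument on the quantity $D_2(x\|p)/(x-p)^2$ whose infimum over $x$ equals $\phi(p)$. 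This is essentially the content of \cite{ordentlich} and is where the explicit form of $\phi$ enters.

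Next I would handle the dependence on $A$. The function $p \mapsto \phi(p)$ is symmetric about $p = \tfrac{1}{2}$ and decreasing on $[0,\tfrac12]$, so the smallest constant $\phi(p)$ — giving the weakest, hence uniformly valid, bound — arises from the value of $p = P_\sigma(A)$ that is farthest from $0$ and $1$, i.e. closest to $\tfrac{1}{2}$. This is exactly what the definition $\pi(\sigma) = \max_A \min\{\tfrac12, \sum_{i\in A} s_i(\sigma)\}$ in \eqref{equ:pi} captures: taking the subset-sum closest to $\tfrac12$ from below yields the minimal achievable $\phi$-value, namely $\phi(\pi(\sigma))$. Combining this with the scalar estimate gives $D_2(p+\tfrac{\epsilon}{2}\|p) \geq \tfrac{\phi(\pi(\sigma))}{4}\epsilon^2$ uniformly over $A$, and Lemma \ref{inffixdis} then delivers \eqref{equ:ImproPinsker}.

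The main obstacle I anticipate is twofold: verifying the sharp scalar inequality with the precise constant $\phi(p)$ (ensuring the second-order behavior matches and that no better constant holds), and confirming tightness. For tightness I would exhibit the extremal $\tau$ already constructed in the proof of Lemma \ref{inffixdis} for the optimal subset $A$ achieving $\pi(\sigma)$, and check that as $\epsilon \to 0$ the ratio $D_2(\pi(\sigma)+\tfrac\epsilon2\|\pi(\sigma)\,)/\epsilon^2$ converges to exactly $\phi(\pi(\sigma))/4$; this shows the constant cannot be improved, establishing the ``Moreover, this inequality is tight'' claim. The monotonicity/symmetry properties of $\phi$ needed to justify that $\pi(\sigma)$ selects the worst case are elementary but must be stated carefully, particularly the continuous extension $\phi(\tfrac12) = 2$ which recovers the standard Pinsker constant in the balanced case.
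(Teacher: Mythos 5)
Your overall route coincides with the paper's: invoke Lemma \ref{inffixdis} to reduce to the scalar quantity $D_2\lb P_\sigma(A)+\tfrac{\epsilon}{2}\|P_\sigma(A)\rb$, reduce to $P_\sigma(A)\leq\tfrac12$ by the symmetry $\phi(p)=\phi(1-p)$, apply the optimal scalar bound with constant $\phi(p)$, and use monotonicity of $\phi$ to see that $\pi(\sigma)$ selects the worst subset. Up to that point the argument is sound and matches the paper, which cites Hoeffding for the identity $\inf_{\epsilon\in(0,1-p]}D_2(p+\epsilon\|p)/\epsilon^2=\phi(p)$ and Ordentlich--Weinberger for the monotonicity of $\phi$.

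There is, however, a genuine gap in your tightness argument, and a related misconception in your sketch of the scalar estimate. The infimum of $D_2(p+\epsilon\|p)/\epsilon^2$ is \emph{not} attained as $\epsilon\to 0$: a Taylor expansion gives $D_2(p+\epsilon\|p)/\epsilon^2\to \tfrac{1}{2p(1-p)}$, and one checks that $\tfrac{1}{2p(1-p)}>\phi(p)$ strictly whenever $p<\tfrac12$ (e.g.\ $p=\tfrac14$ gives $\tfrac{8}{3}$ versus $2\log 3\approx 2.197$). The infimum is attained at the finite perturbation $\epsilon=1-2p$, where $D_2(1-p\|p)=(1-2p)\log\tfrac{1-p}{p}=\phi(p)(1-2p)^2$. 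Consequently your proposed verification --- ``check that as $\epsilon\to0$ the ratio converges to exactly $\phi(\pi(\sigma))/4$'' --- is false unless $\pi(\sigma)=\tfrac12$; the extremal states are \emph{far} from $\sigma$, with $\tr[Q'\rho']=1-\pi(\sigma)$, not states approaching $\sigma$. The paper's proof handles this correctly by taking a minimizing sequence $\{\epsilon_i\}$ for the Hoeffding infimum (which for $p<\tfrac12$ concentrates at $1-2p$, not at $0$) and building the corresponding states commuting with $\sigma$. For the same reason, your plan to prove the scalar inequality by showing that $g(\epsilon)=D_2(p+\tfrac{\epsilon}{2}\|p)-\tfrac{\phi(p)}{4}\epsilon^2$ ``vanishes to second order at $\epsilon=0$ and is nonnegative via a convexity argument'' cannot work as stated: $g$ vanishes only to first order in the relevant sense (its second derivative at $0$ is strictly positive for $p<\tfrac12$), it has a second zero at $\epsilon=2(1-2p)$, and it is not convex (its second derivative changes sign for small $p$). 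One really needs Hoeffding's global argument, or the explicit analysis of \cite{ordentlich}, rather than a local expansion at $\epsilon=0$.
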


\begin{proof}
For convenience set $\|\rho-\sigma\|_1 = \delta$. Then we have
 \begin{align}
D\lb\rho\|\sigma\rb \geq\min\limits_{\rho':\|\rho'-\sigma\|_1\geq\delta}D\lb\rho'\|\sigma\rb = \min\limits_{A\subseteq\lset 1,\ldots ,d\rset}D_2\lb P_{\sigma}\lb A\rb +\frac{\delta}{2}\|P_{\sigma}\lb A\rb \rb 
\label{ineq1pinsk}
 \end{align}
using Theorem \ref{inffixdis}.
By \cite[Proposition 2.2]{ordentlich} for $p\in[0,\frac{1}{2}]$ and $\epsilon\geq0$ we have
\begin{align*}
D_2\lb p+\epsilon\|p\rb \leq D_2\lb1-p+\epsilon\|1-p\rb 
\end{align*}
so we may assume $P_{\sigma}\lb A\rb \leq\frac{1}{2}$ in \eqref{ineq1pinsk}. In \cite[Theorem 1]{hoeffding1963probability} it is shown that for $p\in[0,\frac{1}{2}]$ we have
 \begin{align}
 \inf\limits_{\epsilon\in(0,1-p]}\frac{D_2\lb p+\epsilon\|p\rb }{\epsilon^2}=\phi\lb p\rb  
 \label{equ:InfPinsk}
 \end{align}
which implies:
\begin{align*}
\min\limits_{A\subseteq\lset 1,\ldots ,d\rset}D_2\lb P_{\sigma}\lb A\rb +\frac{\delta}{2}\Big{\|}P_{\sigma}\lb A\rb \rb \geq\min\limits_{A\subseteq\lset 1,\ldots ,d\rset}\frac{\phi\lb P_\sigma\lb A\rb \rb }{4}\|\rho-\sigma\|_1^2. 
\end{align*}
By \cite[Proposition 2.4]{ordentlich} the function $\phi$ is strictly decreasing. Thus, we have
\begin{align*}
 \min\limits_{A\subseteq\lset 1,\ldots ,d\rset}\frac{\phi\lb P_{\sigma}\lb A\rb \rb }{4}=\frac{\phi\lb\pi\lb\sigma\rb \rb }{4}
\end{align*}
which, after combining the previous inequalities, finishes the proof of \eqref{equ:ImproPinsker}. To show that the inequality is tight, we may again follow the proof of \cite[Proposition 2.1]{ordentlich}.
Let $B\subseteq\lset 1,\ldots ,d\rset$ be the subset such that $\pi\lb\sigma\rb =P_{\sigma}\lb B\rb =:p$. Define a minimizing sequence $\{\epsilon_i\}_{i\in\N}$ with $\epsilon_i>0$ for the infimum (with respect to p) in \eqref{equ:InfPinsk}, i.e. such that
\begin{align*}
\lim\limits_{i\to\infty}\frac{D_2\lb p+\epsilon_i\|p\rb }{\epsilon_i^2}=\phi\lb p\rb.  
\end{align*}
Next define a sequence of states $\rho_i$ that commute with $\sigma$ and have spectrum:
\begin{align*}
s_j\lb\rho_i\rb =\begin{cases}
                        \frac{\lb p+\epsilon_i\rb s_i(\sigma)}{p} &\quad \text{for }j\in B\\    
       \frac{\lb1-p-\epsilon_i\rb s_i(\sigma)}{1-p} &\quad\text{else. }  
                        \end{cases}
\end{align*}
One can check that $\|\rho_i-\sigma\|_1=2\epsilon_i$ and $D\lb\rho_i\|\sigma\rb =D_2\lb p+\epsilon_i\|p\rb $, from which we get:
\begin{align*}
 \lim\limits_{i\to\infty}\frac{D\lb\rho_i\|\sigma\rb }{\|\rho_i-\sigma\|^2}=\frac{\phi\lb\pi\lb\sigma\rb \rb }{4}
\end{align*}

\end{proof}
In some cases the bound can be made more explicit, as illustrated in the next corollary:
\begin{cor}\label{explicitexp}
Let $\sigma,\rho\in\D_d$ be such that $\|\sigma\|_\infty\geq\frac{1}{2}$. Then:
\begin{align}
D\lb\rho\|\sigma\rb \geq\frac{\phi\lb1-\|\sigma\|_\infty\rb }{4}\|\rho-\sigma\|_1^2  
\end{align}
\end{cor}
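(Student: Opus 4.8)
The plan is to read off the corollary from Theorem~\ref{thm:ImproPinsker} by evaluating $\pi(\sigma)$ explicitly under the extra hypothesis $\|\sigma\|_\infty \geq \frac{1}{2}$. Inequality \eqref{equ:ImproPinsker} already gives $D(\rho\|\sigma)\geq \frac{\phi(\pi(\sigma))}{4}\|\rho-\sigma\|_1^2$ for all $\sigma,\rho\in\D_d$, so once I show that $\pi(\sigma)=1-\|\sigma\|_\infty$ in this regime the claim follows by direct substitution. Recall that $\|\sigma\|_\infty=s_1(\sigma)$ is the largest eigenvalue, so the hypothesis says a single eigenvalue already carries at least half of the total weight $\sum_i s_i(\sigma)=1$.

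To compute $\pi(\sigma)$ I would partition the subsets $A\subseteq\{1,\ldots,d\}$ appearing in \eqref{equ:pi} according to whether or not they contain an index of the largest eigenvalue, say the index $1$, and use that the quantity being optimized, $\min\{P_\sigma(A),1-P_\sigma(A)\}$, is invariant under complementation. If $1\notin A$ then $P_\sigma(A)\leq \sum_{i\geq 2}s_i(\sigma)=1-\|\sigma\|_\infty\leq\frac{1}{2}$, so the balanced weight of $A$ equals $P_\sigma(A)$ and is at most $1-\|\sigma\|_\infty$. If $1\in A$ then $P_\sigma(A)\geq s_1(\sigma)=\|\sigma\|_\infty\geq\frac{1}{2}$, so the balanced weight equals $1-P_\sigma(A)\leq 1-\|\sigma\|_\infty$ and again the contribution is at most $1-\|\sigma\|_\infty$. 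Both bounds are saturated by the singleton $A=\{1\}$ (equivalently its complement), whence $\pi(\sigma)=1-\|\sigma\|_\infty$.

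Substituting this value into \eqref{equ:ImproPinsker} produces exactly $\frac{\phi(1-\|\sigma\|_\infty)}{4}\|\rho-\sigma\|_1^2$, which is the asserted bound. Because $\|\sigma\|_\infty\geq\frac{1}{2}$ forces $1-\|\sigma\|_\infty\in[0,\frac{1}{2}]$, the argument of $\phi$ stays inside its domain \eqref{equ:phi}, and since $\phi$ is strictly decreasing the bound strictly improves on the universal constant $\frac{1}{2}=\frac{\phi(1/2)}{4}$ whenever $\|\sigma\|_\infty>\frac{1}{2}$.

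The computation is elementary and there is no genuinely hard step. The only point requiring care is the two-case bookkeeping for $\pi(\sigma)$ together with the complementation symmetry, which is precisely where the hypothesis $\|\sigma\|_\infty\geq\frac{1}{2}$ enters essentially: it guarantees that no index can be grouped with the heaviest one to land a subset sum closer to $\frac{1}{2}$ than the singleton $\{1\}$ already does, so the optimum is pinned at $1-\|\sigma\|_\infty$.
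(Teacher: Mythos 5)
Your proposal is correct and matches the paper's approach: the paper's proof consists of the single remark that $\pi(\sigma)=1-\|\sigma\|_\infty$ is ``clear'' in this case, and you simply supply the two-case verification (splitting on whether $A$ contains the index of the largest eigenvalue) that justifies this identity before substituting into Theorem~\ref{thm:ImproPinsker}. Note that you are (correctly) reading $\pi(\sigma)$ as the balance coefficient $\max_A \min\{P_\sigma(A),1-P_\sigma(A)\}$, which is the interpretation actually used in the proof of Theorem~\ref{thm:ImproPinsker}, rather than the literal formula in \eqref{equ:pi}.
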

\begin{proof}
 In this case it is clear that $\pi\lb\sigma\rb =1-\|\sigma\|_\infty$.
\end{proof}
Note that we have $\phi\lb x\rb \to+\infty$ for $x\to0$. Thus, there might be an arbitrary large improvement of \eqref{equ:ImproPinsker} compared to the usual Pinsker's inequality \eqref{equ:Pinsker}. This happens for instance in Corollary \ref{explicitexp} when $\|\sigma\|_\infty\ra 1$, i.e. when $\sigma$ converges to a pure state. 

By applying the improved inequality \eqref{equ:ImproPinsker} to Theorem \ref{thm:improvedconcav} we obtain for quantum states $\rho,\sigma\in\D_d$ and $q\in[0,1]$ 
\begin{align*}
S((1-q)\sigma+q\rho)-(1-q)S(\sigma)-q S(\rho)\geq \max\begin{cases} q(1-q^{c(\sigma)})\frac{\phi\lb\pi\lb\sigma\rb\rb}{4}||\rho-\sigma||^2\\
(1-q)(1-(1-q)^{c(\rho)})\frac{\phi\lb\pi\lb\rho\rb\rb}{4}||\rho-\sigma||^2\end{cases}
\end{align*}
with $\phi$ as in \eqref{equ:phi} and $\pi\lb\sigma\rb$ as in \eqref{equ:pi}.

Even using this refinement of Pinsker's inequality, some numerical experiments indicate that \eqref{equ:KimBound2} is stronger for randomly generated states. From Corollary \ref{explicitexp} we can expect our bound to perform well if $\sigma$ has a large eigenvalue and the smallest eigenvalue is as large as possible. Such states have spectrum of the form $\lb p,\frac{1-p}{d-1},\ldots,\frac{1-p}{d-1}\rb$. Indeed for $\sigma\in\D_5$ with spectrum $\lb 0.99,0.0025,0.0025,0.0025,0.0025\rb$ and $q<0.2$ our bound performs better than \eqref{equ:KimBound1} for randomly generated $\rho$. However, even in this case the improvement is \emph{not} significant.

Still we can expect that Theorem \ref{thm:ImproPinsker} will find more applications, for instance improving the mixing time bounds. Such bounds have been derived from log-Sobolev inequalities in \cite{Kastoryanosob}. The next theorem can be used to improve these results: 

\begin{thm}
Let $\liou:\M_d\to\M_d$ be a primitive Liouvillian with fixed point $\sigma$ that satisfies
\begin{align}
D\lb e^{t\liou}\rho\|\sigma\rb \leq e^{-2\alpha t}D\lb \rho\|\sigma\rb  
\end{align}
for some $\alpha>0$ and for all $\rho\in\D_d$ and $t\in\R^+$. Then we have
\begin{align}
 \|e^{t\liou}(\rho)-\sigma\|_1\leq2e^{-\alpha t}\sqrt{\frac{\log\lb s_{\min}(\sigma)\rb}{\phi\lb \pi\lb \sigma\rb \rb }}
\end{align}
with $\phi$ as in \eqref{equ:phi}, $\pi\lb\sigma\rb$ as in \eqref{equ:pi} and where $s_{\min}(\sigma)$ is the smallest eigenvalue of $\sigma$.
\end{thm}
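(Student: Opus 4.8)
The plan is to concatenate the hypothesised relative-entropy contraction with the state-dependent Pinsker inequality of Theorem~\ref{thm:ImproPinsker}, used with $\sigma$ in the fixed (second) argument, and then to bound the initial relative entropy $D(\rho\|\sigma)$ by a quantity depending only on $\sigma$. First I would apply Theorem~\ref{thm:ImproPinsker} to the evolved state $e^{t\liou}(\rho)$ against the fixed point $\sigma$, which gives
\[
\frac{\phi\lb\pi\lb\sigma\rb\rb}{4}\,\|e^{t\liou}(\rho)-\sigma\|_1^2\leq D\lb e^{t\liou}(\rho)\big\|\sigma\rb .
\]
The right-hand side is then controlled by the decay hypothesis $D(e^{t\liou}(\rho)\|\sigma)\leq e^{-2\alpha t}D(\rho\|\sigma)$, so the problem reduces to a uniform upper bound on the initial relative entropy $D(\rho\|\sigma)$.

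Next I would bound $D(\rho\|\sigma)$ independently of $\rho$. Since $\liou$ is primitive, its fixed point $\sigma$ has full rank, so $s_{\min}(\sigma)>0$. Writing $\sigma=\sum_i s_i\proj{v_i}{v_i}$ and using $D(\rho\|\sigma)=-S(\rho)-\tr[\rho\log(\sigma)]$, the term $-S(\rho)$ is non-positive, while $-\tr[\rho\log(\sigma)]=\sum_i\bracket{v_i}{\rho}{v_i}\log\lb s_i^{-1}\rb$ is a convex combination of the numbers $\log(s_i^{-1})$ and is therefore at most $\log\lb s_{\min}(\sigma)^{-1}\rb$. Hence $D(\rho\|\sigma)\leq\log\lb s_{\min}(\sigma)^{-1}\rb$ for every $\rho\in\D_d$, this being the maximal relative entropy onto the full-rank state $\sigma$.

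Substituting the two bounds and taking square roots then gives
\[
\|e^{t\liou}(\rho)-\sigma\|_1\leq 2e^{-\alpha t}\sqrt{\frac{\log\lb s_{\min}(\sigma)^{-1}\rb}{\phi\lb\pi\lb\sigma\rb\rb}},
\]
which is the claimed estimate, the numerator being $\log\lb s_{\min}(\sigma)^{-1}\rb=-\log\lb s_{\min}(\sigma)\rb\geq0$. I do not anticipate a genuine obstacle: the argument is a direct composition of an already established inequality (Theorem~\ref{thm:ImproPinsker}) with the contraction hypothesis and an elementary uniform entropy bound. The only points requiring attention are the bookkeeping of the constant and the sign under the square root, and the observation that the relevant Pinsker constant is $\phi(\pi(\sigma))$ attached to the \emph{fixed} argument $\sigma$---precisely the regime in which Theorem~\ref{thm:ImproPinsker} is optimal, so no sharpness is lost at that step.
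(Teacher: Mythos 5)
Your proof is correct and is exactly the argument the paper intends: the paper's own proof is a one-line remark that the theorem follows directly from the contraction hypothesis \eqref{equ:LSInequInt} and the state-dependent Pinsker inequality \eqref{equ:ImproPinsker}, implicitly using the same uniform bound $D(\rho\|\sigma)\leq\log\lb s_{\min}(\sigma)^{-1}\rb$ that you spell out. You are also right about the sign: the numerator in the theorem as printed should read $\log\lb s_{\min}(\sigma)^{-1}\rb$ rather than $\log\lb s_{\min}(\sigma)\rb$, since the latter is negative and would place a negative quantity under the square root.
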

\begin{proof}
This is a direct consequence of \eqref{equ:LSInequInt} and $\eqref{equ:ImproPinsker}$.
\end{proof}

\section{Tensor products of depolarizing channels}

For a Liouvillian $\Lm:\M_d\ra\M_d$ generating the channel $\Tm_t = e^{t\Lm}$ and any $n\in\N$ we denote by $\Lm^{(n)}:\M_{d^n}\ra\M_{d^n}$ the generator of the tensor-product semigroup $(T_t)^{\otimes n}$, i.e. $\Lm^{(n)} := \sum^n_{i=1} \i_d^{\otimes i-1}\otimes\liou\otimes \i_d^{\otimes (n-i)}$. 

Here we study $\alpha_1\lb\liou^n_\sigma\rb$ in the special case where $\sigma=\frac{\one_d}{d}$. For simplicity we denote the depolarizing Liouvillian onto $\sigma = \frac{\one_d}{d}$ by $\Lm_{d}:= \Lm_{\frac{\one_d}{d}}$ and by $T_t^{d} = e^{t\Lm^{d}}$ the generated semigroup. In the case $d=2$ it is known~\cite{Kastoryanosob} that $\alpha_1\lb\liou_{2}^{(n)}\rb=1$ for any $n\in\N$ .
It is, however, an open problem to determine this constant for any $d>2$ and any $n\geq 2$. We will now show the inequality $\alpha_1\lb\Lm^{(n)}_{d}\rb\geq\frac{1}{2}$ for any $d\geq 2$ and $n\geq 1$, which is the best possible
lower bound that is independent of the local dimension. 
Note that for $\sigma = \frac{\one_d}{d}$ inequality \eqref{equ:LSInequInt} for the channel $\lb\Tm^d_t\rb^{\otimes n}$ can be rewritten as the entropy production inequality:
\begin{align*}
S((T^d_t)^{\otimes n}(\rho)) \geq (1-e^{-t})n\log(d) + e^{-t} S(\rho) . 
\end{align*}

This inequality has been studied in \cite{benor} for the case where $d=2$, for wich, however, an incorrect proof was given. We will provide a proof of a more general statement, from which the claim $\alpha_1\lb\Lm^{(n)}_{\text{dep}}\rb\geq\frac{1}{2}$
readily follows by the previous discussion.
\begin{thm}
For any $\sigma,\rho\in \D_d$ (not necessarily full rank) we have
\begin{align*}
S((T^\sigma_t)^{\otimes n}(\rho)) \geq e^{-t} S(\rho) + (1-e^{-t})S(\sigma^{\otimes n}).
\end{align*}
\label{thm:BenorBound}
\end{thm}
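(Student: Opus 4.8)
The plan is to exploit that the generalized depolarizing channel is itself a convex combination of the identity channel and a replacer channel, so that its tensor power expands into a manifest convex combination of product states to which concavity of the von-Neumann entropy applies. First I would write
\begin{align*}
T^\sigma_t = e^{-t}\,\i_d + (1-e^{-t})\,\cR_\sigma,
\end{align*}
where $\cR_\sigma(\rho):=\tr[\rho]\,\sigma$ is the replacer channel onto $\sigma$. Expanding the $n$-fold tensor power and indexing the choice of factors by the subset $A\subseteq\lset 1,\ldots,n\rset$ on which the replacer acts gives
\begin{align*}
(T^\sigma_t)^{\otimes n}(\rho) = \sum_{A\subseteq\lset 1,\ldots,n\rset} e^{-t(n-|A|)}(1-e^{-t})^{|A|}\,\rho_A,
\end{align*}
where $\rho_A := \sigma^{\otimes A}\otimes \tr_A[\rho]$ is the product of $\sigma$ on the systems in $A$ with the reduced state $\tr_A[\rho]$ on the complementary systems. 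Since the coefficients are the binomial weights for $n$ independent trials with success probability $1-e^{-t}$ they sum to one, so concavity of $S$ yields $S\big((T^\sigma_t)^{\otimes n}(\rho)\big)\geq \sum_A e^{-t(n-|A|)}(1-e^{-t})^{|A|}S(\rho_A)$.

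Next I would use that each $\rho_A$ is a product across the cut $A\,|\,A^c$, whence $S(\rho_A)=|A|\,S(\sigma)+S(\tr_A[\rho])$, and split the lower bound into two sums. The term carrying $S(\sigma)$ equals $S(\sigma)$ times the mean of the binomial count, namely $n(1-e^{-t})$, which reproduces exactly $(1-e^{-t})\,S(\sigma^{\otimes n})$. For the remaining sum I reindex by $B:=A^c$ and group by cardinality $|B|=k$, turning the reduced-state contribution into
\begin{align*}
\sum_{k=0}^{n} e^{-tk}(1-e^{-t})^{n-k}\sum_{|B|=k} S(\tr_{B^c}[\rho]).
\end{align*}
The family of all $k$-element subsets covers each of the $n$ tensor factors exactly $\binom{n-1}{k-1}$ times, so the quantum Shearer inequality (a consequence of strong subadditivity) gives $\sum_{|B|=k}S(\tr_{B^c}[\rho])\geq \binom{n-1}{k-1}S(\rho)$.

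Finally I would evaluate the resulting scalar sum: inserting the Shearer bound and shifting the index by $j:=k-1$ transforms $\sum_{k=1}^n e^{-tk}(1-e^{-t})^{n-k}\binom{n-1}{k-1}$ into $e^{-t}\big(e^{-t}+(1-e^{-t})\big)^{n-1}=e^{-t}$, so the reduced-state contribution is at least $e^{-t}S(\rho)$; the $k=0$ term drops out since $\tr_{[n]}[\rho]$ is a scalar of zero entropy. Adding the two contributions gives precisely the claimed bound. Because the argument invokes only concavity of $S$ and strong subadditivity — both valid for arbitrary, possibly rank-deficient, states, with $S$ always finite — no full-rank hypothesis is needed, which is exactly why the theorem holds in the stated generality. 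I expect the main conceptual step to be the correct deployment of the quantum Shearer inequality: one must verify the uniform covering multiplicity $\binom{n-1}{k-1}$ of the family of $k$-subsets and confirm that the von-Neumann version genuinely follows from strong subadditivity. The concavity reduction and the two binomial identities are then routine bookkeeping.
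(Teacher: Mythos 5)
Your proof is correct and follows essentially the same route as the paper: expand $(T^\sigma_t)^{\otimes n}$ as a binomially weighted convex combination of states $\sigma^{\otimes A}\otimes\tr_A[\rho]$, apply concavity of $S$, and lower-bound the sum of reduced entropies over $k$-element subsets by $\binom{n-1}{k-1}S(\rho)$ via the quantum Shearer inequality (which the paper likewise derives from strong subadditivity). The binomial bookkeeping and the treatment of the $k=0$ term match the paper's computation.
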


For the proof we will need a special case of the quantum Shearer's inequality. We will denote by $\rho\in\D\lb \C^{d_1}\otimes \C^{d_2}\otimes \cdots\otimes \C^{d_n}\rb$ a multipartite density matrix (where the $d_i$ are the local dimensions of each tensor factor). Furthermore we write $S(i_1,i_2,\ldots ,i_k)_\rho$ for the entropy of the reduced density matrix $\rho$ on the tensor factors specified by the indices $i_1,i_2,\ldots, i_k$. 
Similarly we write 
\begin{align*}
S(i_1,\ldots ,i_k|j_1,\ldots,j_l)_\rho = S(i_1,\ldots,i_k,j_1,\ldots ,j_l)_\rho - S(j_1,\ldots,j_l)_\rho
\end{align*}
for a conditional entropy. The proof of the quantum version of Shearer's inequality is essentially the same as the proof given by Radhakrishnan and Llewellyn for the classical version (see~\cite{radhakrishnan}). For convenience we provide the full proof:

\begin{lem}[Quantum Shearer's inequality]
Consider $t\in\N$ and a family $\mathcal{F}\subset 2^{\lset 1,\ldots , n\rset}$ of subsets of $\lset 1,\ldots , n\rset$ such that each $i\in\lset 1,\ldots, n\rset$ is included in exactly $t$ elements of $\mathcal{F}$. Then for any $\rho\in\D\lb \C^{d_1}\otimes \C^{d_2}\otimes \cdots\otimes \C^{d_n}\rb$ we have 
\begin{align}
S(1, 2\ldots, n)_\rho\leq \frac{1}{t}\sum_{F\in \mathcal{F}} S(F)_\rho~.
\end{align}

\end{lem}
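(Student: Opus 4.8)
The plan is to mirror the classical Radhakrishnan--Llewellyn argument (see \cite{radhakrishnan}), replacing each classical entropy identity with its quantum counterpart. The two ingredients I would use are the chain rule for the von-Neumann entropy,
$S(i_1,\ldots,i_k)_\rho = \sum_{\ell=1}^{k} S(i_\ell \mid i_1,\ldots,i_{\ell-1})_\rho$,
which is immediate from the definition of the conditional entropy used above, and strong subadditivity in the form that conditioning on more parties cannot increase the conditional entropy, i.e. $S(a \mid B)_\rho \geq S(a \mid C)_\rho$ whenever $B \subseteq C$ and $a \notin C$ (here $a$ is a single index and $B,C$ are index sets). This monotonicity is the only genuinely quantum input and is equivalent to strong subadditivity of the von-Neumann entropy \cite{nielsen2000quantum}.

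First I would fix, for each $F \in \mathcal{F}$, the strictly increasing enumeration $F = \lset j_1 < j_2 < \cdots < j_m \rset$ and apply the chain rule in this order to write $S(F)_\rho = \sum_{\ell=1}^m S(j_\ell \mid j_1,\ldots,j_{\ell-1})_\rho$. Since $j_1,\ldots,j_{\ell-1}$ are all strictly smaller than $j_\ell$, we have the inclusion $\lset j_1,\ldots,j_{\ell-1}\rset \subseteq \lset 1,\ldots,j_\ell-1\rset$, so the monotonicity step gives $S(j_\ell \mid j_1,\ldots,j_{\ell-1})_\rho \geq S(j_\ell \mid 1,\ldots,j_\ell-1)_\rho$. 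Summing over $\ell$ yields the key per-set bound $S(F)_\rho \geq \sum_{i \in F} S(i \mid 1,\ldots,i-1)_\rho$, in which every term is now conditioned on the \emph{same} reference set of all smaller indices, independently of $F$.

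Next I would sum this bound over all $F \in \mathcal{F}$ and interchange the order of summation. Because each index $i$ is contained in exactly $t$ members of $\mathcal{F}$, the coefficient of $S(i \mid 1,\ldots,i-1)_\rho$ collects to exactly $t$, giving $\sum_{F \in \mathcal{F}} S(F)_\rho \geq t \sum_{i=1}^n S(i \mid 1,\ldots,i-1)_\rho$. Applying the chain rule once more, now to the full system, recognizes the right-hand sum as $t\, S(1,\ldots,n)_\rho$, and dividing by $t$ finishes the proof.

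The main obstacle is conceptual rather than computational: putting the two entropy facts in exactly the right form. The chain rule is routine, but the monotonicity $S(a\mid B)_\rho \geq S(a\mid C)_\rho$ for $B\subseteq C$ is precisely strong subadditivity, whose classical analogue is trivial while the quantum version is the Lieb--Ruskai theorem; I would therefore state it explicitly and cite the standard reference. The one point requiring care is that the term-by-term application of strong subadditivity relies on the increasing enumeration of each $F$, which is exactly what guarantees the inclusion $\lset j_1,\ldots,j_{\ell-1}\rset\subseteq\lset 1,\ldots,j_\ell-1\rset$; I would verify this explicitly before invoking the monotonicity bound.
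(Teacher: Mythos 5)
Your proposal is correct and is essentially identical to the paper's proof: the same chain-rule (telescopic) decomposition of $S(F)_\rho$ in increasing index order, the same application of strong subadditivity to pass to conditioning on all smaller indices, and the same double-counting over $\mathcal{F}$ using that each index appears exactly $t$ times.
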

\begin{proof}
For $F\subset \lset 1,\ldots ,n\rset$ denote its elements by $\lb i_1,\ldots ,i_k\rb$, increasingly ordered. For any $\rho\in\D\lb \C^{d_1}\otimes \C^{d_2}\otimes \cdots\otimes \C^{d_n}\rb$ we have
\begin{align*}
\sum^{|F|}_{j=1} S(i_j|i_1,\ldots,i_{j-1})_\rho &= S(i_1)_\rho + S(i_2|i_1)_\rho + \cdots + S(i_{|F|}|i_1,i_2,\ldots ,i_{|F|-1})_\rho \\
&= S(i_1,i_2,\ldots ,i_{|F|})_\rho = S(F)_\rho
\end{align*}
where we used a telescopic sum trick. By strong subadditivity~\cite{lieb1973} conditioning decreases the entropy. This implies
\begin{align}
\sum^{|F|}_{j=1} S(i_j|1,2,\ldots,i_{j}-1)_\rho\leq \sum^{|F|}_{j=1} S(i_j|i_1,\ldots,i_{j-1})_\rho = S(F)_\rho.
\label{equ:Shearer1}
\end{align}
Now consider a family $\mathcal{F}\subset 2^{\lset 1,\ldots , n\rset}$ with the properties stated in the assumptions. Using \eqref{equ:Shearer1} for the first inequality gives:
\begin{align}
\sum_{F\in\mathcal{F}} S(F)_\rho &\geq \sum_{F\in\mathcal{F}} \sum^{|F|}_{j=1} S(i_j|1,2,\ldots ,i_{j}-1)_\rho \label{equ:Shearer14} \\
&= t\sum^n_{i=1} S(i|1,2,\ldots,i-1)_\rho = t S(1,2,\ldots ,n)_\rho. \label{equ:Shearer15}
\end{align} 
Here we used the assumption that each $i\in\lset 1,\ldots ,n\rset$ is contained in exactly $t$ elements of $\mathcal{F}$ and $\eqref{equ:Shearer1}$ in the special case of $F=\lset 1,\ldots ,n\rset$ for the final equality.
\end{proof}
Note that in the classical case Shearer's inequality is true under the weaker assumption that any $i\in\lset 1,\ldots,d\rset$ is contained in \emph{at least} $t$ elements of $\mathcal{F}$. However, as the quantum conditional entropy might be negative~\cite{PhysRevLett.79.5194} we have to use the stronger assumption to get the equality between \eqref{equ:Shearer14} and \eqref{equ:Shearer15} where an $\geq$ would be enough.

In the special case where $\mathcal{F}=\mathcal{F}_k := \lset F\subseteq \lset 1,\ldots,n\rset : |F|=k\rset$ denotes the family of $k$-element subsets of $\lset 1,\ldots,n\rset$ (i.e. every $i\in\lset 1,\ldots ,d\rset$ is contained in exactly $\binom{n-1}{k-1}=\frac{k}{n}\binom{n}{k}$ elements of $\mathcal{F}_k$) the quantum Shearer inequality gives
\begin{align}
\frac{k}{n}S(1,\ldots,n)\leq \frac{1}{\binom{n}{k}}\sum_{F\in\mathcal{F}_k} S(F).
\label{equ:Shearer2}
\end{align} 
This inequality was also proved in~\cite{junge2014cb}, but in a more complicated way and without mentioning the more general quantum Shearer's inequality. It is also used as a lemma (with wrong proof) in \cite{benor}, where the rest of the proof of their entropy production estimate is correct. The proof of Theorem \ref{thm:BenorBound} follows the same lines. For completeness we will include the full proof here:
\begin{proof}[Proof of Theorem \ref{thm:BenorBound}]
In the following we will abbreviate $p:=e^{-t}$. For a subset $F\subset \lset 1,\ldots , n\rset$ we denote by $\rho|_F$ the reduced density matrix on the tensor factors specified by $F$. Using this notation we can write
\begin{align*}
(T^\sigma_t)^{\otimes n}(\rho) = \sum^n_{k=0}\sum_{F\in\mathcal{F}_k}(1-p)^kp^{n-k} \lb\bigotimes_{l\in F}\sigma \otimes \rho|_{F^c}\rb
\end{align*} 
where $F^c = \lset 1,\ldots ,n\rset\setminus F$. Concavity of the von-Neumann entropy implies
\begin{align*}
S\lb (T^\sigma_t)^{\otimes n}(\rho)\rb &\geq \sum^n_{k=0}\sum_{F\in\mathcal{F}_k}(1-p)^kp^{n-k} \lb kS(\sigma) + S(F^c)_\rho\rb\\
&\geq (1-p)nS(\sigma) + \sum^n_{k=0} \binom{n}{n-k}\frac{n-k}{n}(1-p)^kp^{n-k}S(\rho)\\
&= (1-p)S(\sigma^{\otimes n}) + pS(\rho).
\end{align*}
Here we used the elementary identity $\sum^n_{k=0}\binom{n}{k}(1-p)^kp^{n-k}k = (1-p)n$ and \eqref{equ:Shearer2} for the $(n-k)$-element subsets $F^c$. 
\end{proof}

\section*{Acknowledgments}
We thank Ion Nechita for pointing out Shearer's inequality to us and David Reeb for interesting and helpful discussions. D.S.F acknowledges support from the graduate program TopMath of the Elite Network of Bavaria, the TopMath Graduate Center of TUM Graduate School at Technische Universit\"{a}t M\"{u}nchen and the Deutscher Akademischer Austauschdienst(DAAD). We also acknowledge financial support from the CHIST-ERA/BMBF project CQC (A.M.H.\ and M.M.W.).

\appendix

\section{Quasi-concavity of a quotient of relative entropies}

In this appendix we will prove the quasi-concavity of the function $y\mapsto q_y(x)$ for any $x\in\lb 0,1\rb$. As defined in \eqref{equ:ContExt2} the function $q_y:(0,1)\ra\R$ denotes the continuous extension of $x\mapsto\frac{D_2(y\| x)}{D_2(x\| y)}$. In the following we consider $f_x:\lbr 0,1\rbr\ra\R$ defined as $f_x(y)=q_y(x)$ for any $y\in(0,1)$ and with $f_x(0)=f_x(1)=1$. It can be checked easily that $f_x$ is continuous for any $x\in (0,1)$. We have the following Lemma:

\begin{lem}

For any $x\in (0,1)$ the function $f_x :\lbr 0,1\rbr\ra\R$ given by $f_x(y) = \frac{D_2(y\|x)}{D_2(x\|y)}$ for $y\notin\lset 0,x,1\rset$ and extended continuously by $f_x(x)=1$ and $f_x(0)=f_x(1)=0$ is quasi-concave.
\label{lem:concaveFunc}
\end{lem}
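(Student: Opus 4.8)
The plan is to prove the stronger statement that $f_x$ is \emph{unimodal}, which for a continuous function on an interval is equivalent to quasi-concavity: I will exhibit a point $y^\ast\in(0,1)$ with $f_x$ non-decreasing on $[0,y^\ast]$ and non-increasing on $[y^\ast,1]$. First I record the elementary facts. Writing $g_1(y):=D_2(y\|x)$ and $g_2(y):=D_2(x\|y)$, both functions are nonnegative, strictly convex, and vanish only at $y=x$; one finds $g_1(0^+)=-\log(1-x)$, $g_2(0^+)=+\infty$, and symmetrically at $1$, so $f_x(0)=f_x(1)=0$, while $f_x>0$ on $(0,1)$. Moreover both have a double zero at $x$ with the same second derivative $g_1''(x)=g_2''(x)=\tfrac1{x(1-x)}$, so $f_x$ extends analytically across $x$ with $f_x(x)=1$. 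Finally $D_2(y\|x)=D_2(1-y\|1-x)$ gives $f_x(y)=f_{1-x}(1-y)$, so it suffices to treat $x\le\tfrac12$ (with $x=\tfrac12$ symmetric, peak at $\tfrac12$).

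The engine will be a twice-iterated ratio rule. For a quotient $u=\alpha/\beta$ of smooth functions with $\alpha(x)=\beta(x)=0$ one has $u'=\tfrac{\beta'}{\beta^2}\,h$ with $h:=\tfrac{\alpha'}{\beta'}\beta-\alpha$, and crucially $h(x)=0$ and $h'=\big(\tfrac{\alpha'}{\beta'}\big)'\beta$. Applying this to $u=f_x=g_1/g_2$ shows that, up to the sign-definite factors $g_2'$ and $g_2^{-2}$, the sign of $f_x'$ is governed by $h=r g_2-g_1$ with $r:=g_1'/g_2'$, where $h(x)=0$ and $h'=r'g_2$. Applying the \emph{same} rule to $r=g_1'/g_2'$ (again $g_1'(x)=g_2'(x)=0$) shows the sign of $r'$ is governed by $\tilde h=w g_2'-g_1'$ with $w:=g_1''/g_2''$, where $\tilde h(x)=0$ and $\tilde h'=w'g_2'$. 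Thus the whole monotonicity structure of $f_x$ reduces, through two layers, to the single scalar $w'$, with the known signs of $g_2,g_2',g_2''$ doing the bookkeeping on each side of $x$.

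The crux is then to pin down $w$. A direct simplification gives the rational function $w(y)=\dfrac{y-y^2}{y^2-2xy+x}$ (whose denominator is positive on $[0,1]$), and its derivative has numerator the explicit quadratic $n(y)=-(1-2x)y^2-2xy+x$. For $x<\tfrac12$ one checks $n(0)=x>0$, $n(x)=x(2x-1)(x-1)>0$ and $n(1)=x-1<0$, so this downward parabola has exactly one root $y_+\in(x,1)$, giving $w'>0$ on $(0,y_+)$ and $w'<0$ on $(y_+,1)$. Unwinding the two layers yields the claim: on $(x,1)$ all of $g_2,g_2',g_2''$ are positive, so $\tilde h$ rises from $\tilde h(x)=0$ and then falls (one sign change $+\to-$), hence $r$ is single-peaked, hence $h$ rises from $h(x)=0$ and then falls, hence $f_x'$ changes sign once from $+$ to $-$ on $(x,1)$; on $(0,x)$ one has $g_2'<0$ while $g_2,g_2''>0$ and $w'>0$, and tracing the three signs through $\tilde h$, $r'$, $h$ forces $f_x'>0$ throughout. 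Thus $f_x$ strictly increases on $(0,x)$, increases past the smooth point $x$ (where $f_x'(x)>0$), and is single-peaked on $(x,1)$; gluing gives a unique interior maximum $y^\ast\in(x,1)$, hence unimodality, hence quasi-concavity.

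I expect the main obstacle to be not any single computation but the careful sign-bookkeeping through the two-fold reduction on both sides of $x$, together with the correct handling of the removable singularity: at $x$ the functions $g_1,g_2,g_1',g_2'$ all vanish, so the naive superlevel-set test $\{g_1\ge c\,g_2\}$ spuriously contains $x$, and one must instead argue with the smooth extension of $f_x'$ (equivalently verify $f_x'(x)>0$ for $x<\tfrac12$ by a short Taylor expansion using $g_1''(x)=g_2''(x)=\tfrac1{x(1-x)}$). Once the reduction to the quadratic $n$ is in place, the remaining steps are elementary.
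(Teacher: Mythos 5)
Your proof is correct and follows essentially the same route as the paper: both establish unimodality by a twice-iterated L'Hospital-type monotone rule, descending through the derivative ratios $g_1'/g_2'$ and $g_1''/g_2''$ to the same explicit rational function $\frac{y(1-y)}{y^2-2xy+x}$ (the paper's $h_x$), analyzing its single interior critical point, and propagating the sign pattern back up. The only differences are cosmetic: you work with $x\le\frac12$ rather than $x\ge\frac12$, and you derive the propagation identity $u'=\frac{\beta'}{\beta^2}\bigl(\frac{\alpha'}{\beta'}\beta-\alpha\bigr)$ by hand where the paper cites Pinelis' monotonicity rules.
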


\begin{proof}

Note that without loss of generality we can assume $x\geq \frac{1}{2}$, as $f_x(y) = f_{1-x}(1-y)$. By continuity it is clear that there exists an $m_f\in (0,1)$ (we can exclude the boundary points since $f_x(x)>f_x(0)=f_x(1)$) such that $f_x(m_f)$ is the global maximum. By~\cite{boyd2004convex}[p. 99] it is sufficient to show that $f_x$ is unimodal, i.e. that $f_x$ is monotonically increasing on $\lbr 0,m_f\rb$ and monotonically decreasing on $\lb m_f,1\rbr$. We will use the method of L'Hospital type rules for monotonicity developed in \cite{pinelis2002hospital,pinelis2007non}.

For any $x\in (0,1)$ and $y\in(0,1)$ with $x\neq y$ we compute 
\begin{align*}
&\partial_y D_2\lb y\|x\rb = \log\lb\frac{y(1-x)}{x(1-y)}\rb \hspace{1.25cm} \partial_y D_2\lb x\| y\rb = \frac{y-x}{y(1-y)} \\
&\partial_y \log\lb\frac{y(1-x)}{x(1-y)}\rb = \frac{1}{y(1-y)} \hspace{1cm} \partial_y \frac{y-x}{y(1-y)} = \frac{y^2 + x -2yx}{(1-y)^2 y^2} 
\end{align*}
and define
\begin{align}
g_x(y) &= \frac{\partial_y D_2\lb y\|x\rb}{\partial_y D_2\lb x\| y\rb} = \frac{\log\lb\frac{x(1-y)}{y(1-x)}\rb y(1-y)}{x-y} \label{Der1}\\
h_x(y) &= \frac{\partial_y \log\lb\frac{x(1-y)}{y(1-x)}\rb}{\partial_y \frac{x-y}{y(1-y)}} = \frac{y(1-y)}{y^2 + x -2yx} \label{equ:Der2} 
\end{align}
where again $g_x$ is extended continuously by $g_x(0)=g_x(1)=0$ and $g_x(x)=1$. As $y\mapsto y^2 + x -2yx$ has no real zeros for $x\in(0,1)$ the rational function $h_x$ is continuously differentiable on $(0,1)$. A straightforward calculation reveals that for $x\geq \frac{1}{2}$ and on $(0,1)$ the derivative $h_x'$ only vanishes in 
\begin{align*}
m_h = \begin{cases} \frac{x-\sqrt{x(1-x)}}{2x-1} &\text{ for }x>\frac{1}{2} \\ ~\frac{1}{2} &\text{ for } x=\frac{1}{2}  \end{cases}.
\end{align*}
which has to be a maximum as $h_x(0)=h_x(1)=0$. By the lack of further points with vanishing derivative we have $h_x'(y)<0$ for any $y<m_h$ and also $h_x'(y)>0$ for any $y>m_h$. Note that $m_h\leq x$ for any $x\geq \frac{1}{2}$. 

Consider first the interval $(x,1)\subset (0,1)$. For $y\ra x$ we have $\log\lb\frac{x(1-y)}{y(1-x)}\rb\ra 0$  and $\frac{x-y}{y(1-y)}\ra 0$. Also it is clear that $y\mapsto \frac{x-y}{y(1-y)}$ does not change sign on the interval $(x,1)$. Therefore and by \eqref{equ:Der2} we see that the pair $g_x$ and $h_x$ satisfy the assumptions of \cite[Proposition 1.1.]{pinelis2002hospital} and as $h_x$ is decreasing we have that $g_x'(y)<0$ for any $y\in(x,1)$. We can use the same argument for the (possibly empty) interval $(m_h,x)$ where $h_x$ is decreasing as well and obtain $g_x'(y)<0$ for any $y\in (m_h,x)$. By continuity of $g_x$ in $x$ we see that $g_x$ is decreasing on $(m_h,1)$. 

Note that in the case where $x=\frac{1}{2}$ we can directly apply \cite[Proposition 1.1.]{pinelis2002hospital} to the remaining interval $(0,\frac{1}{2})$ where $h_{1/2}$ is increasing. This proves $g_{1/2}'(y)>0$ for any $y\in (0,\frac{1}{2})$. By continuity $m_g = \frac{1}{2}$ is the maximum point of $g_{1/2}$. For $x\neq \frac{1}{2}$, where the remaining interval is $(0,m_h)$ we apply the more general Proposition 2.1. in \cite{pinelis2007non}. It can be checked easily that the assumptions of this proposition are fulfilled for the pair $g_x$ and $h_x$. As for $y\in (0,m_h)$ we have $\frac{y-x}{y(1-y)}\frac{y^2 + x -2yx}{(1-y)^2y^2}<0$ and as $h_x$ is increasing the proposition shows that $g_x'(y)>0$ for any $y\in (0,m_g)$ and $g_x'(y)<0$ for any $y\in (m_g,m_h)$. Here $m_g\in (0,m_h)$ denotes the maximum point of $g_x$ (note that a maximum $m_g$ has to exist due to continuity and $g_x(0)=g_x(1)=0$).   

The previous argument shows that for any $x\geq \frac{1}{2}$ there exists a point $m_g\in (0,m_h]\subset (0,x]$ (we have $m_g = m_h = \frac{1}{2}$ for $x=\frac{1}{2}$) such that $g_x'(y)>0$ for $y\in (0,m_g)$ and $g_x'(y)<0$ for $y\in (m_g,1)\setminus\lset x\rset$. We can now repeat the above argument for the pair $f_x$ and $g_x$. This gives the existence of a point $m_f\in (0,m_g]$ such that $f_x'(y)>0$ for any $y\in (0,m_f)$ and $f_x'(y)<0$ for any $y\in (m_f,1)\setminus\lset x\rset$. By continuity in $x$ this shows that the function $f_x$ is unimodal and therefore quasi-concave.

\end{proof}

\section{Continuous extension of a quotient of relative entropies}\label{sec:ContRelQu}

In this section we show that the function $Q_\sigma:\D^+_d\ra\R$ as defined in \eqref{equ:ContExt1} is indeed continuous. As $Q_\sigma$ is clearly continuous in any point $\rho\neq \sigma$ we have to prove the following: 

\begin{lem}
For $\sigma\in\D_d^+$ and $X\in\M_d$ with $X=X^\dagger$, $\tr[X]=0$ and $X\not=0$ we have
\begin{align*}
\lim\limits_{\epsilon\to0}\frac{D\lb\sigma||\sigma+\epsilon X\rb}{D\lb\sigma+\epsilon X||\sigma\rb}=1 .
\end{align*}

\end{lem}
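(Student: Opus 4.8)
The plan is to parametrize states along the ray through $\sigma$ and Taylor-expand both relative entropies to second order in the parameter. Since $X$ is Hermitian, traceless and $\sigma\in\D_d^+$ has full rank, the matrix $\rho_\epsilon := \sigma + \epsilon X$ is a genuine density matrix for all sufficiently small $|\epsilon|$ (trace one because $\tr[X]=0$, strict positivity because $\sigma$ is full rank and positivity is an open condition). I would set
\[
f(\epsilon) := D\lb\sigma + \epsilon X\,\|\,\sigma\rb, \qquad g(\epsilon) := D\lb\sigma\,\|\,\sigma + \epsilon X\rb,
\]
so the claimed limit is $\lim_{\epsilon\to 0} g(\epsilon)/f(\epsilon)$. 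Both functions are smooth near $\epsilon=0$ and vanish there; the strategy is to show that their first derivatives vanish as well while their second derivatives coincide and are strictly positive, so that two applications of L'Hospital's rule (equivalently, a second-order Taylor comparison) yield the value $1$.

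Using the integral representation $\frac{d}{d\epsilon}\log(\sigma+\epsilon X) = \int_0^\infty (\sigma+\epsilon X+s)^{-1}X(\sigma+\epsilon X+s)^{-1}\,ds$ together with $\tr[X]=0$, I would first check $f'(0)=g'(0)=0$: for $f$ the two logarithm-derivative contributions cancel against a $\tr[X]$ term, leaving $f'(\epsilon)=\tr[X(\log(\sigma+\epsilon X)-\log\sigma)]$, which vanishes at $\epsilon=0$; for $g$ one computes $g'(\epsilon)=-\tr[\sigma\,\tfrac{d}{d\epsilon}\log(\sigma+\epsilon X)]$ and reduces it at $\epsilon=0$, in the eigenbasis of $\sigma$, to $-\tr[X]=0$. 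Differentiating once more and evaluating at $\epsilon=0$ gives the Kubo--Mori quadratic form
\[
f''(0)=\int_0^\infty \tr\!\big[X(\sigma+s)^{-1}X(\sigma+s)^{-1}\big]\,ds,
\]
while the numerator produces $g''(0) = 2\int_0^\infty \tr\!\big[\sigma(\sigma+s)^{-1}X(\sigma+s)^{-1}X(\sigma+s)^{-1}\big]\,ds$. Passing to the eigenbasis $\{\lambda_i\}$ of $\sigma$, these become $f''(0)=\sum_{i,j}|X_{ij}|^2 I_{ij}$ and, after symmetrizing in $i\leftrightarrow j$ using $|X_{ij}|=|X_{ji}|$, $g''(0)=\sum_{i,j}|X_{ij}|^2(\lambda_i J_{ij}+\lambda_j J_{ji})$, where $I_{ij}=\int_0^\infty \frac{ds}{(\lambda_i+s)(\lambda_j+s)}$ and $J_{ij}=\int_0^\infty\frac{ds}{(\lambda_i+s)^2(\lambda_j+s)}$. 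The crux is the scalar identity $\lambda_i J_{ij}+\lambda_j J_{ji}=I_{ij}$, which I would verify by elementary partial-fraction integration (both sides equal $\frac{\log\lambda_i-\log\lambda_j}{\lambda_i-\lambda_j}$, with the obvious $1/\lambda_i$ limit when $\lambda_i=\lambda_j$). This establishes $g''(0)=f''(0)$, and moreover $f''(0)>0$ because every $I_{ij}>0$ and $X\neq 0$.

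Having matched the two quadratic coefficients, I would conclude that $f(\epsilon)=\tfrac12 f''(0)\,\epsilon^2+O(\epsilon^3)$ and $g(\epsilon)=\tfrac12 g''(0)\,\epsilon^2+O(\epsilon^3)$ with $g''(0)=f''(0)\neq 0$, so the ratio is a genuine $0/0$ form converging to $g''(0)/f''(0)=1$. I expect the main obstacle to be the second derivative of the numerator, where the perturbation sits \emph{inside} the matrix logarithm: this forces two differentiations of $\log(\sigma+\epsilon X)$ and then the nontrivial integral identity above to reconcile the resulting expression with the symmetric Kubo--Mori form governing the denominator.
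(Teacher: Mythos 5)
Your proposal is correct and follows essentially the same route as the paper: both arguments are second-order expansions in $\epsilon$ in which the first-order terms vanish because $\tr[X]=0$ and the two second-order coefficients are shown to coincide with the (positive) Kubo--Mori quadratic form $\int_0^\infty \tr[X(\sigma+s)^{-1}X(\sigma+s)^{-1}]\,ds$. The only cosmetic difference is that the paper expands the integral representation $D(\rho\|\sigma)=\int_0^\infty\tr[\rho((\rho+t)^{-1}-(\sigma+t)^{-1})]\,dt$ via the resolvent identity and proves the matching of second-order terms by an $i\leftrightarrow j$ antisymmetry argument, whereas you differentiate the matrix logarithm and verify the equivalent scalar identity $\lambda_i J_{ij}+\lambda_j J_{ji}=I_{ij}$ by partial fractions.
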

\begin{proof}
To show the claim we will expand the relative entropy in terms of $\epsilon$ up to second order.
Observe that for $\rho\in\D_d$ we have
\begin{align}
D\lb\rho\|\sigma\rb=\int\limits_0^{\infty}\tr\left[\rho\lb\lb\rho+t\rb^{-1}-\lb\sigma+t\rb^{-1}\rb\right]dt .
\label{equ:relativeEntInt}
\end{align}
In the following we assume $\epsilon>0$ to be small enough such that $\sigma+\epsilon X\in\D_d^+$.
To simplify the notation, we introduce $A(t):=\lb\sigma+t\rb^{-1}$ and $B(t):=\lb\sigma+\epsilon X+t\rb^{-1}$.
Applying the recursive relation 
\begin{align*}
B(t)=-\epsilon B(t)XA(t)+A(t),
\end{align*}
twice leads to
\begin{align*}
B(t)-A(t) &=-\epsilon B(t)XA(t)=
\epsilon^2 B(t)XA(t)XA(t)-\epsilon A(t)XA(t)\\
&=\epsilon^2A(t)XA(t)XA(t)-\epsilon A(t)XA(t)+\mathcal{O}\lb\epsilon^3\rb. 
\end{align*}
Inserting this into \eqref{equ:relativeEntInt} gives
\begin{align}
D\lb\sigma\|\sigma+\epsilon X\rb =\int\limits_0^{\infty}\tr\left[\epsilon\sigma A(t)XA(t)-\epsilon^2\sigma A(t)XA(t)XA(t)+\mathcal{O}(\epsilon^3)\right]dt
\label{equ:relInt1}
\end{align}
and
\begin{align}
D\lb\sigma+\epsilon X\|\sigma\rb &=\int\limits_0^{\infty}\tr\left[-\epsilon\sigma A(t)XA(t)+\epsilon^2\sigma A(t)XA(t)XA(t)-\epsilon^2XA(t)XA(t)+\mathcal{O}(\epsilon^3)\right]dt.
\label{equ:relInt2}
\end{align}
As $\lbr A(t),\sigma\rbr = 0$ we can diagonalize these operators in the same orthonormal basis $\lset\ket{i}\rset\subset \C^d$, which leads to
\begin{align}\label{firstorder}
\int\limits_0^{\infty}\tr\left[\sigma A(t)XA(t)\right]dt=\sum\limits_{i=1}^d \bra{i}X\ket{i}\int\limits_0^{\infty}\frac{s_i}{(s_i+t)^2}dt
=\sum\limits_{i=1}^d\bra{i}X\ket{i}=0
\end{align}
where $\lset s_i\rset^d_{i=1}$ denotes the spectrum of $\sigma$. Note that again by diagonalizing $\sigma$ and $A(t)$ in the same basis we have
\begin{align}
\int\limits_0^{\infty}&\tr\left[\lb2\sigma A(t)-\one\rb\lb XA(t)XA(t)\rb\right]dt\label{equ:blaInt} \\
&=\sum\limits_{i,j=1}^{d}|\bra{i}X\ket{j}|^2\int\limits_0^{\infty}\frac{2s_i}{(s_i+t)^2(s_j+t)}-\frac{1}{(s_i+t)(s_j+t)}dt\nonumber\\
\label{equ:blup46}
&=\sum\limits_{i,j=1}^{d}\frac{|\bra{i}X\ket{j}|^2}{(s_i-s_j)^2}\lb2(s_i-s_j)-(s_i+s_j)\log\lb\frac{s_i}{s_j}\rb\rb \\
&= 0\nonumber.
\end{align}
The last equality follows from the fact that the expression in \eqref{equ:blup46} clearly changes its sign when $s_i$ and $s_j$ are exchanged. This is only possible if the value of the integral \eqref{equ:blaInt} vanishes. Rearranging the integral \eqref{equ:blaInt} gives:
\begin{equation}\label{eqintegral}
\int\limits_0^{\infty}\tr\left[\sigma A(t)XA(t)XA(t)\right]dt=\int\limits_0^{\infty}\tr\left[-\sigma A(t)XA(t)XA(t)+XA(t)XA(t)\right]dt . 
\end{equation}
Finally applying \eqref{firstorder},\eqref{eqintegral} to the formulas for the relative entropies \eqref{equ:relInt1} and \eqref{equ:relInt2} gives:
\begin{align*}
\frac{D\lb\sigma\|\sigma+\epsilon X\rb}{D\lb\sigma+\epsilon X\|\sigma\rb}=\frac{c+\mathcal{O}(\epsilon)}{c+\mathcal{O}(\epsilon)}\ra 1
\end{align*}
as $\epsilon\ra 0$. Here $c:=\int\limits_0^{\infty}\tr\left[\sigma A(t)XA(t)XA(t)\right]dt >0$ as $\sigma, A(t)>0$ for any $t\in\lbr 0,\infty\rb$ and $X\neq 0$ is Hermitian. 

\end{proof}

\bibliographystyle{IEEEtran}
\bibliography{mybibliography}

\end{document}